\documentclass[letterpaper, 10 pt, conference]{ieeeconf}  

\IEEEoverridecommandlockouts                              

\overrideIEEEmargins                                      
\usepackage[utf8]{inputenc}
\usepackage{amssymb}
\usepackage{amsmath}
\usepackage{bm}
\usepackage{hyperref}
\usepackage{tikz}
\usepackage{tkz-euclide}
\usepackage{chngcntr}
\usepackage{verbatim}
\usepackage{mathtools}
\usepackage{esvect}
\usepackage{subfiles}
\usepackage{color}
\usepackage{xspace}
\usepackage{xparse}
\usepackage[noend]{algpseudocode}
\usepackage{subfigure}
\usepackage{bbm}
\usepackage{enumerate}
\usepackage{graphicx}
\usepackage{multirow}



\usepackage{pgfplotstable}
\usepackage{pgfplots}
\pgfplotsset{
	table/search path={plot_figures},
}
\usepackage{tikz}
\usepackage{xr}
\usetikzlibrary{external}
\tikzexternalize[prefix=tikz_fig/]
\usetikzlibrary{arrows,%
	petri,%
	topaths}%
\usetikzlibrary{graphs,graphs.standard}
\usepackage{tkz-berge}

\usepackage[percent]{overpic}

\usepackage{epsfig} 
\newtheorem{assumption}{Assumption}

\newtheorem{theorem}{Theorem}
\newtheorem{lemma}[theorem]{Lemma}

\newtheorem{corollary}[theorem]{Corollary}


\usepackage{pgfplotstable}
\usepackage{pgfplots}
\usepackage{cite}
\pgfplotsset{
	table/search path={plot_figures},
}
\usepackage{tikz}
\usepackage{xr}
\tikzexternalize[prefix=tikzfig/]
\usetikzlibrary{arrows,%
	petri,%
	topaths}%
\usetikzlibrary{graphs,graphs.standard}
\usepackage{tkz-berge}
\usepackage{subfigure}

\allowdisplaybreaks

\title{\LARGE \bf
	Non-Bayesian Social Learning with Gaussian Uncertain Models
}

\author{James Z. Hare, C\'esar A. Uribe, Lance Kaplan, and Ali Jadbabaie 
	\thanks{This research was sponsored by the DARPA Lagrange, Vannevar Bush Fellowship, and OSD LUCI programs. 
	}
	\thanks{J.Z.H. and L.K. (\textit{\{james.z.hare.civ, lance.m.kaplan.civ\}@mail.mil}) are with the U.S. Army Research Laboratory, Adelphi, MD. C.A.U and A.J. are with the Laboratory for Information and Decision Systems (LIDS), and the Institute for Data, Systems, and Society (IDSS),
		Massachusetts Institute of Technology, Cambridge, MA
		(\textit{\{cauribe,jadbabai\}@mit.edu}).  }%
}
\begin{document}

\maketitle
	\thispagestyle{empty}
	\pagestyle{empty}

\begin{abstract}
Non-Bayesian social learning theory provides a framework for distributed inference of a group of agents interacting over a social network by sequentially communicating and updating beliefs about the unknown state of the world through likelihood updates from their observations. Typically, likelihood models are assumed known precisely. However, in many situations the models are generated from sparse training data due to lack of data availability, high cost of collection/calibration, limits within the communications network, and/or the high dynamics of the operational environment. Recently, social learning theory was extended to handle those model uncertainties for categorical models. In this paper, we introduce the theory of Gaussian uncertain  models and study the properties of the beliefs generated by the network of agents. We show that even with finite amounts of training data, non-Bayesian social learning can be achieved and all agents in the network will converge to a consensus belief that provably identifies the best estimate for the state of the world given the set of prior information.

\end{abstract}

\section{Introduction}
The setting of non-Bayesian social learning~\cite{JMST2012} often assumes that there is a network of boundedly rational agents who are receiving private observations, communicating, and updating beliefs about the model that best represents the underlying truth. In this framework, the beliefs are assumed to be sufficient statistics for what individuals know about the state of the world, and the update is known to suffer from  \textit{imperfect recall}~\cite{MTJ18}, which significantly simplifies combining the agents beliefs as compared to Bayesian social learning theory, at the expense of ``double counting'' information ~\cite{GK2003,ADLO2011,KT2013,RJM2017}. 

The literature of non-Bayesian social learning theory typically studies various social learning rules that allows the agents to sequentially combine and update their beliefs in a manner that aggregates all of the information available in the network. Much of the learning rules developed consider that each agent combines their neighbors' beliefs using a weighted arithmetic~\cite{RT2010,RMJ2014,LJS2018} or a geometric average~\cite{JMST2012,SJ2013}. Then, the beliefs are updated by scaling the combined beliefs by the likelihood of their new observation given that the particular model is the ground truth. Variations of these learning rules have been proposed to handle fixed and time-varying graphs~\cite{UHLJ2019}, weakly-connected graphs~\cite{SYS2017,SYS2018}, increasing self-confidence~\cite{UJ2019}, compact hypotheses sets~\cite{nedic2017distributed}, and adversarial attacks~\cite{HULJ2019,SV2018}.

There are several variations of these social learning rules proposed in the literature which aim to improve the learning rate of the agents. These include using one-step memory~\cite{NOU2017}, observation reuse~\cite{SV2018,BT2018}, and most recently the min-rule~\cite{MRS2019}. Although the current literature has made significant advances in this problem, they all assume that the statistical models used to evaluate the likelihoods are known precisely. This assumption requires that the agents collect a large set of training exemplars to ensure that the estimated models provide an accurate representation. However, in many situations, the amount of training data available may be limited or too expensive to collect, requiring that the agents incorporate their \emph{uncertainty} into the likelihood models.

Modeling uncertainty has been previously studied in the fields of possibility theory \cite{DP2012}, probability intervals \cite{W1997}, and belief functions \cite{S1976, SK1994} by extending probability theory and expressing the likelihood model parameters within a fixed interval. Other approaches follow a Bayesian framework by modeling the uncertainty in the likelihood model parameters as a second-order probability density function \cite{W1996, J2018}, which is typically a conjugate prior of the underlying statistical model. Then, the \emph{uncertain likelihood} model can be computed as the posterior predictive distribution \cite{R1984}.  

Recently, the concept of \textit{uncertain models} for observations drawn from an unknown multinomial distribution was proposed and was included in a social learning setting\cite{HULJ2019_TSP}, which was  later extended to time-varying directed graphs~\cite{UHLJ2019}. This technique proposed an \textit{uncertain likelihood ratio} as the likelihood model to test the consistency of the prior evidence (training data) with the measurement sequence (testing data) \cite{HULJ2019_TSP,UHLJ2019}. The uncertain likelihood ratio is defined as a standard likelihood ratio test, except as a ratio of the posterior predictive distribution conditioned on the prior evidence to the posterior predictive distribution conditioned on zero prior evidence (or non-informative prior). In this regard, the beliefs generated by the social learning rules are evaluated on their own merit. Applying this approach
in the limiting condition when the amount of prior evidence grows unboundedly, the agents accurately infer the ground truth model and achieve the same result as traditional non-Bayesian social learning theory. Additionally, when the amount of prior evidence is finite, uncertain models generalize the problem allowing for a measure of confidence in the inference results. 

In this work, we expand upon the idea of uncertain models \cite{HULJ2019_TSP} to address the scenario when measurements are real-valued and drawn from a Gaussian distribution. We derive the general Gaussian uncertain model for situations where the mean and precision are unknown and implement this model into a standard non-Bayesian social learning rule. We found that the beliefs of every agent converge to the centralized solution, which is a geometric average of their individual uncertain likelihood ratios. Furthermore, as the agents' amount of prior evidence grows unboundedly, the beliefs with Gaussian uncertain models are the same as traditional non-Bayesian social learning theory. This indicates that Gaussian uncertain models can successfully be used as a general inference test for any amount of prior evidence. 

The remainder of this paper is organized as follows. First, in Section~\ref{sec:pfam} we present the problem, our proposed algorithm, and the main results. Then, we derive the Gaussian uncertain models in Section~\ref{sec:ulm} and the uncertain likelihood update utilized in the social learning rule in Section~\ref{sec:nbsl_gum}. Then, we outline the process of proving the main results, in Section~\ref{sec:proofs}. Finally, we provide a numerical analysis in Section~\ref{sec:sim} to empirically validate our results and conclude the paper in Section~\ref{sec:con}.  

\textbf{Notation:} Bold symbols represent a vector/matrix, while a non-bold symbol represents its element. The indexes $i$ and $j$ represent agents and $t$ represents time. We abbreviate the terminology independent identically distributed as i.i.d.. We use $[\mathbf{A}]_{ij}$ to represent the entry of matrix $\mathbf{A}'s$ $i$th row and $j$th column. The empty set is denoted as $\emptyset$. The Gaussian distribution is 
\begin{eqnarray}
\mathcal{N}(\omega|\mu,\lambda^{-1}) = \frac{\sqrt{\lambda}}{\sqrt{2\pi}}e^{-\frac{\lambda(\omega-\mu)^2}{2}},
\end{eqnarray}
and the Gaussian-gamma distribution is
\begin{align}
\mathcal{N}\mathcal{G}(x,\lambda|\mu,\kappa,\alpha,\beta) = \frac{\beta^\alpha \sqrt{\kappa}}{\Gamma(\alpha) \sqrt{2\pi}} \lambda^{\alpha-\frac{1}{2}} e^{-\beta\lambda}e^{-\frac{\kappa\lambda(x-\mu)^2}{2}}.
\end{align}
We denote the Kullback-Liebler (KL) divergence as 
\begin{eqnarray}
D_{KL}(p(x)\|q(x)) = -\int p(x)\log\left(\frac{p(x)}{q(x)} \right)dx,
\end{eqnarray}
where $p(x)$ and $q(x)$ are two continuous probability distributions over $x\in \mathbb{R}$. 

\section{Problem Formulation, Social Learning Rule, and Main Result} \label{sec:pfam}
\subsection{Hypotheses, signals, and prior evidence}
Consider a network of $m$ agents connected over a social network who are trying to infer the unknown state of the world $\theta^*$ from a finite set of states (or hypotheses) $\boldsymbol{\Theta}=\{\theta_1,...,\theta_m\}$. At each time step $t\ge 1$, we assume that each agent $i$ collects an i.i.d. private observation $\omega_{it}\in\mathbb{R}$ sampled from an unknown Gaussian distribution $P_{i\theta^*}=\mathcal{N}(\cdot|\mu_{i\theta^*},\lambda_{i\theta^*}^{-1})$ with mean $\mu_{i\theta^*}$ and precision $\lambda_{i\theta^*} = {1}/{\sigma_{i\theta^*}^2}$\footnote{It is possible to generalize this to within $\mathbb{R}^n$, however, this condition is out of the scope of this paper and will be considered as future work}. We denote the set of measurements received up to time $t$ as $\boldsymbol{\Omega}_{i1:t}$, where the measurements are independent across the agents.\footnote{In general, each agent may have a different measurement model or sensing capability from one another. This can result in $\mu_{i\theta^*}$ and $\lambda_{i\theta^*}$ varying between agents.} The overall goal of the agents is to collectively agree on the hypothesis that best matches the ground truth distribution. 

\begin{figure}
    \centering
    \includegraphics[width=.5\columnwidth]{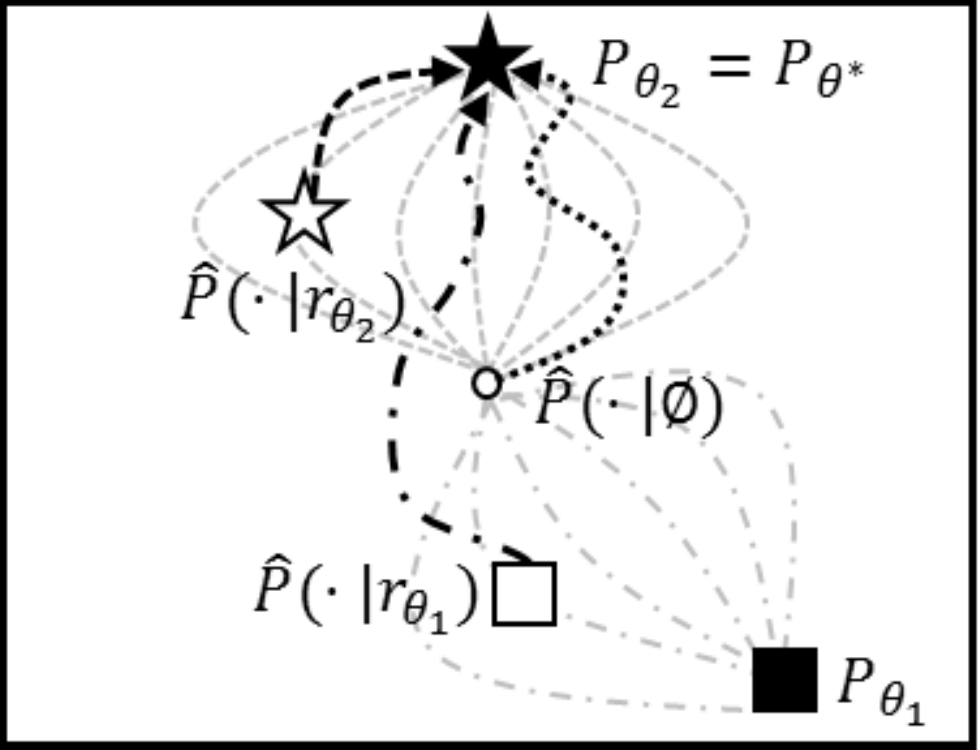}
    \caption{\textbf{Geometric interpretation of uncertain models:} The outer square represents a continuous probability space of the distributions of signals $\omega_{it}$. The solid square and star represent the true distribution for hypothesis $\theta_1$ and $\theta_2$ respectively, while the open square and star are the uncertain distributions for $\theta_1$ and $\theta_2$, where $\theta_2$ is $\theta^*$. The open circle in the center represents the model of complete ignorance. The uncertain distribution is a mixture between $P(\cdot|\emptyset)$ and $P_\theta$, which depends on the amount of prior evidence collected. Zero prior evidence causes the location of $\hat{P}(\cdot|\mathbf{r}_\theta)$ to be $\hat{P}(\cdot|\emptyset)$, while an infinite amount of prior evidence causes it to be located at $P_\theta$. Then, as measurements are received, the distributions $\hat{P}(\cdot|r_{\theta_1}), \hat{P}(\cdot|r_{\theta_2})$, and $\hat{P}(\cdot|\emptyset)$ traverse through the probability space until they eventually collapse on $P_{\theta^*}$ with an infinite amount of measurements. We define the uncertain likelihood ratio $\Lambda_{\theta}=\hat{P}(\cdot|r_\theta)/\hat{P}(\cdot|\emptyset)$ as our consistency test, where a hypothesis with prior evidence consistent with the ground truth will have a shorter trajectory from $\hat{P}(\cdot|r_\theta)$ to $P_{\theta^*}$ than $\hat{P}(\cdot|\emptyset)$ to $P_{\theta^*}$.}
    \label{fig:simplex} \vspace{-15pt}
\end{figure}

Traditionally, each agent undergoes a training phase where they collect a sufficiently large amount of labeled training data to accurately estimate the parameters $\mu_{i\theta}$ and $\lambda_{i\theta}$ of the distribution $P_{i\theta}=\mathcal{N}(\cdot|\mu_{i\theta},\lambda_{i\theta}^{-1})$ for each hypothesis $\theta$. This results in a precisely known statistical model for each $\theta$. However, this work considers that the agents collect a varying amount of prior evidence (training data) for each hypothesis, which may lead to inaccurate estimates of the parameters, requiring uncertain statistical models. 

Consider that an agent $i$ has access to a hypothesis $\theta\in\boldsymbol{\Theta}$ and collects $R_{i\theta}\ge0$ drawn from the distribution $P_{i\theta}=\mathcal{N}(\cdot|\mu_{i\theta},\lambda_{i\theta}^{-1})$, where $\mu_{i\theta}$ and $\lambda_{i\theta}$ are unknown. This results in a set of $R_{i\theta}$ training samples $\mathbf{r}_{i\theta}=\{r_{ik\theta}\}_{\forall k \in\{1,...,R_{i\theta}\}}$ which are then used to estimate the parameter $\mu_{i\theta}$ and $\lambda_{i\theta}^{-1}$. 

Instead of adopting a frequentist's interpretation and simply estimating the parameters by the sample mean and variance, this work implements a Bayesian approach that exploits conjugate distributions to estimate the posterior distribution of $\mu$ and $\lambda$ conditioned on the prior evidence $\mathbf{r}_{i\theta}$. Since the family of statistical models is assumed to be Gaussian with unknown mean and variance, a natural conjugate distribution is the Gaussian-gamma distribution \cite{D2005}. Then, we estimate the \emph{uncertain likelihood} using the parameters posterior distribution by predicting the likelihood of the measurement sequence give the prior evidence, i.e., 
$\hat{P}(\boldsymbol{\Omega}_{i1:t}|\mathbf{r}_{i\theta})$, as the posterior predictive distribution \cite{R1984}. An example of the uncertain likelihood is shown in Fig.~\ref{fig:simplex}.  

Typically in hypothesis testing \cite{D2005}, the likelihoods are normalized over the set of hypotheses and the hypothesis with the maximum likelihood is selected as the ground truth. This can also be thought of as the likelihood distribution closest to the ground truth in the probability space, see Fig.~\ref{fig:simplex}. However, in the uncertain case, the posterior predictive distribution for each hypothesis is computed with a varying amount of prior evidence making them incommensurable \cite{HULJ2019_TSP}.
Thus, we normalize the uncertain likelihood by another posterior predictive distribution of the measurement sequence, except here we use a noninformative Gaussian-gamma prior \cite{G2006} having zero prior evidence, i.e., $\hat{P}(\boldsymbol{\Omega}_{i1:t}|\mathbf{r}_{i\theta}=\emptyset)$. This \emph{uncertain likelihood ratio} $\Lambda_{i\theta}= \hat{P}(\cdot|\mathbf{r}_{i\theta})/\hat{P}(\cdot|\emptyset)$ acts as our uncertain statistical model and is derived in Section~\ref{sec:ulm}. The uncertain likelihood ratio is a consistency test between the prior evidence and the measurement sequence. It quantifies the amount of evidence to accept or reject the hypothesis that a model $\theta$ is distinguishable from the ground truth $\theta^*$.  The set of hypotheses that are indistinguishable from the ground truth for the $i$-th agent is 
$$
\boldsymbol{\Theta}_i^* = \{\theta | \mathcal{N}(\cdot|\mu_{i,\theta},\lambda^{-1}_{i,\theta}) = \mathcal{N}(\cdot|\mu_{i,\theta^*},\lambda^{-1}_{i,\theta^*}) \ \ \forall \theta \in \boldsymbol{\Theta} \}.
$$

We can visually interpret the uncertain likelihood ratio in Fig.~\ref{fig:simplex}. As an agent collects measurements, the uncertain distributions $\hat{P}(\cdot|\mathbf{r}_{\theta_1})$, $\hat{P}(\cdot|\mathbf{r}_{\theta_2})$, and $\hat{P}(\cdot|\emptyset)$ inch their way closer to $P_{\theta^*}$, where their rate depends on the amount of prior evidence collected. The number of time steps that the uncertain distribution is closer/further to $P_{\theta^*}$ than $\hat{P}(\cdot|\emptyset)$ governs much greater than $1$ or closer to $0$ the uncertain likelihood ratio will be, respectively. Thus, our consistency test, presented in Section~\ref{sec:ulm}, accepts/rejects hypotheses with shorter/longer trajectories of $\hat{P}(\cdot|\mathbf{r}_\theta)\to P_{\theta^*}$ than $\hat{P}(\cdot|\emptyset)\to P_{\theta^*}$, respectively. 

\subsection{Social Learning Rule}
Next, we propose the distributed inference algorithm for a group of agents interacting over a social network. Initially at time $t=0$, each agent $i$ constructs a belief $\mu_{i0}(\theta)=1$ for each hypothesis $\theta\in\boldsymbol{\Theta}$, where each belief represents an aggregated uncertain likelihood ratio discussed above and presented in Section~\ref{sec:ulm}. Then, for each time step $t\ge1$, each agent sequentially communicates their beliefs to their neighbors, receives a new observation, and updates their beliefs using a social learning rule. 

We assume that the agents interact over a network modeled as an undirected graph $\mathcal{G}=(\mathcal{M},\mathcal{E}),$\footnote{Note that the results herein hold for directed graphs as long as the the graph satisfies Assumption~\ref{assum:graph}.} where $\mathcal{M}=\{1,...,m\}$ is the set of agents and $\mathcal{E}$ is the set of edges between agents. If agents $i$ and $j$ can communicate their beliefs to each other, then $(i,j)\in\mathcal{E}$. We denote agent $i$'s set of neighbors as $\mathcal{M}_{i} = \{j | (j,i)\in\mathcal{E}, \forall j\in\mathcal{M}\}$ and each edge is assumed to be weighted and modeled as an adjacency matrix $\mathbf{A}$, where $[\mathbf{A}]_{ij}>0$ if $(i,j)\in\mathcal{E}$. 

During each time step $t\ge1$, each agent $i$ has access to the information $\psi_{it}(\theta) = \{\omega_{it+1}, \mathbf{r}_{i\theta}, \mu_{it}(\theta), \{\mu_{jt}(\theta)\}_{\forall j\in\mathcal{M}_i}\}$ for each hypothesis $\theta$. Then, agent $i$ updates their belief $\mu_{it+1}(\theta)$ using the following update rule:
\begin{eqnarray}
\mu_{it+1}(\theta) = \ell_{i\theta}(\omega_{it+1}|\boldsymbol{\Omega}_{i1:t})\prod_{j\in\mathcal{M}_i} \mu_{jt}(\theta)^{[\mathbf{A}]_{ij}}, \label{eq:LL_rule} 
\end{eqnarray}
where the product on the right hand side of (\ref{eq:LL_rule}) represents a geometric average of their neighbors beliefs and $\ell_{i\theta}(\omega_{it+1})$ is the Gaussian uncertain likelihood update defined as\footnote{Here, we have simplified the notation and will only provide the conditioned measurements $\boldsymbol{\Omega}_{i1:t}$ when necessary.}

\begin{align} \label{eq:ell}
\ell_{i\theta}(\omega_{it+1}) =& \frac{\Gamma(\alpha_{R_{i\theta}+t+1})\Gamma(\alpha_{t})\beta_{t+1}^{\alpha_{t+1}}\beta_{R_{i\theta}+t}^{\alpha_{R_{i\theta}+t}}}{\Gamma(\alpha_{t+1})\Gamma(\alpha_{R_{i\theta}+t}) \beta_{t}^{\alpha_{t}} \beta_{R_{i\theta}+t+1}^{\alpha_{R_{i\theta}+t+1}}} \nonumber \\ &\cdot\frac{(\kappa_{t+1}\kappa_{R_{i\theta}+t})^{1/2}}{(\kappa_{t}\kappa_{R_{i\theta}+t+1})^{1/2}}. 
\end{align} 

For simplicity of presentation, we postpone the explicit definition of the uncertain likelihood update parameters to Sections~\ref{sec:ulm} and ~\ref{sec:nbsl_gum} in (\ref{eq:ul_params}), (\ref{eq:ell_params}), and (\ref{eq:ulr_params}). 
Note that $\kappa$ and $\alpha$ represent a count of the data items, while $\beta$ is a centralized sum of squares for the set of these data items. This is a closed form expression of a ratio of predictive posterior distributions, where the numerator is the expected value of  $\mathcal{N}(\omega_{it+1}|\mu,\lambda^{-1})$ taken over the Gaussian-gamma distribution conditioned on the prior evidence $\mathbf{r}_{i\theta}$ and the measurement sequence $\boldsymbol{\Omega}_{i1:t}$ and the denominator is the expected value of $\mathcal{N}(\omega_{it+1}|\mu,\lambda^{-1})$ taken over the Gaussian-gamma distribution conditioned on only the measurement sequence $\boldsymbol{\Omega}_{i1:t}$. 

This function is designed such that the product of Gaussian uncertain likelihood updates $\prod_{\tau=1}^{t+1} \ell_{i\theta}(\omega_{i\tau})$ is equal to the uncertain likelihood ratio at time $t+1$. Therefore, the beliefs $\mu_{it+1}(\theta)$ represent an aggregated geometric average of all of the agents individual uncertain likelihood ratios. 

Next, we provide some assumptions that allow us to quantify where the beliefs generated by the update rule~(\ref{eq:LL_rule}) converge asymptotically with Gaussian uncertain models. 

\begin{assumption}\label{assum:graph}

               The graph $\mathcal{G}$ and matrix $A$ are such that:

               \begin{enumerate}[(a)]

                              \item $A$ is doubly-stochastic with $\left[A\right]_{ij} = a_{ij} > 0$ for  $i\ne j$ if and only if $(i,j)\in E$.

                              \item $A$ has positive diagonal entries, $a_{ii}>0$ for all $i \in V $.

                              \item The graph $\mathcal{G}$ is  connected.

               \end{enumerate}

\end{assumption}

Assumption~\ref{assum:graph} states that the adjacency matrix is ergordic, i.e., aperiodic and irreducible, and is a common assumption in the literature \cite{NOU2017}. This allows every agent to communicate their beliefs throughout the entire network. 

\begin{assumption} \label{as:dis}
There is at least one agent that can distinguish any  $\theta  \ne \theta^*$ so that $\cap_{i\in\mathcal{M}} \boldsymbol{\Theta}_i^* = \{\theta^*\}$.
\end{assumption}
Assumption~\ref{as:dis} guarantees that the collective group of agents can determine $\theta^*$. As a consequence of Theorem~\ref{cor:dogmatic} below, these agents can determine $\theta^*$ with infinite prior evidence (ie., precise models) as $t \rightarrow \infty$. 

\subsection{Main Results}
Now we are ready to present the properties of the beliefs generated using the update rule given by~\eqref{eq:LL_rule}.

\begin{theorem} \label{th:LL}
Let Assumption \ref{assum:graph} hold. Then, the beliefs generated using the update rule (\ref{eq:LL_rule}) have the following property:
\begin{equation}
\lim_{t\to\infty} \mu_{it}(\theta) = \left(\prod_{j=1}^m \widetilde{\Lambda}_{j\theta}\right)^{\frac{1}{m}} 
\end{equation}
for all $i\in\mathcal{M}$ with probability 1 where 
\begin{eqnarray}\label{eq:tilde_ULR}
\widetilde{\Lambda}_{j\theta} = \frac{\mathcal{N}(\mathbf{r}_{j\theta}|\mu_{j\theta^*},\lambda_{j\theta^*}^{-1})}{P(\mathbf{r}_{j\theta})}, 
\end{eqnarray}
is agent $j$'s asymptotic uncertain likelihood ratio and
\begin{eqnarray} \label{eq:prior_prob}
\hat{P}(\mathbf{r}_{j\theta}) = \frac{\Gamma(\alpha_{R_{j\theta}})\beta_0^{\alpha_0}(2\pi)^{-\frac{R_{j\theta}}{2}}}{\Gamma(\alpha_0)\beta_{R_{j\theta}}^{\alpha_{R_{\theta}}}}\left( \frac{\kappa_0}{\kappa_{R_{j\theta}}}\right)^\frac{1}{2},
\end{eqnarray}
is the posterior predictive distribution of the prior evidence conditioned on a noninformative prior with parameters
\begin{align} \label{eq:ul_params}
& \kappa_{R_{j\theta}}=\kappa_0+R_{j\theta}, \ \ \ \ \ \alpha_{R_{j\theta}} = \alpha_0 + \frac{R_{j\theta}}{2}, \nonumber \\
& \beta_{R_{j\theta}} = \beta_0 + \frac{R_{j\theta}}{2}\left(s_{j\theta}^2 +\frac{\kappa_0  (\bar{r}_{j\theta}-\mu_0)^2}{\kappa_{R_{j\theta}}}\right),
\end{align}
$\bar{r}_{j\theta} = (\sum_{k=1}^{R_{j\theta}} r_{jk\theta})/R_{j\theta}$, $s_{j\theta}^2=(\sum_{k=1}^{R_{j\theta}}(r_{jk\theta}-\bar{r}_{j\theta})^2)/R_{j\theta}$, $\mu_0=0$, $\alpha_0=1$, $\beta_0=1$, and $\kappa_0=1$. 
\end{theorem}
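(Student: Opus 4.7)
My plan is to linearize the belief dynamics by taking logarithms of~(\ref{eq:LL_rule}), converting the geometric averaging over neighbors into a linear combination governed by the doubly stochastic matrix $\mathbf{A}$. Setting $L_{it}:=\log\mu_{it}(\theta)$ and $e_{j\tau}:=\log\ell_{j\theta}(\omega_{j\tau})$, the update becomes $L_{i,t+1}=e_{i,t+1}+\sum_{j}[\mathbf{A}]_{ij}L_{jt}$, and unrolling with initial condition $L_{i0}=0$ gives
\begin{equation*}
L_{it} = \sum_{\tau=1}^{t}\sum_{j=1}^{m}[\mathbf{A}^{t-\tau}]_{ij}\,e_{j\tau}.
\end{equation*}
The paper already observes that $\sum_{\tau=1}^{t}e_{j\tau}=\log\Lambda_{jt}(\theta)$, the log of agent $j$'s uncertain likelihood ratio at time $t$, so the task reduces to (i) showing that $L_{it}$ tracks the network average $\tfrac{1}{m}\sum_{j}\log\Lambda_{jt}(\theta)$, and (ii) showing each $\Lambda_{jt}(\theta)\to\widetilde{\Lambda}_{j\theta}$ almost surely.

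For (i) I would subtract the consensus expression to obtain
\begin{equation*}
L_{it}-\tfrac{1}{m}\sum_{j=1}^{m}\log\Lambda_{jt}(\theta) = \sum_{\tau=1}^{t}\sum_{j=1}^{m}\bigl([\mathbf{A}^{t-\tau}]_{ij}-\tfrac{1}{m}\bigr)e_{j\tau}.
\end{equation*}
Under Assumption~\ref{assum:graph}, $\mathbf{A}$ is a primitive doubly stochastic matrix, so Perron-Frobenius yields the geometric bound $\bigl|[\mathbf{A}^k]_{ij}-\tfrac{1}{m}\bigr|\le C\rho^{k}$ for some $\rho\in(0,1)$. Combined with (ii), the increments $e_{j\tau}=\log\Lambda_{j\tau}(\theta)-\log\Lambda_{j,\tau-1}(\theta)$ vanish almost surely as $\tau\to\infty$, and a Toeplitz-style splitting argument (handle $\tau\le T$ by the factor $\rho^{t-T}$ and $\tau>T$ by the smallness of $|e_{j\tau}|$) then shows $\sum_{\tau=1}^{t}\rho^{t-\tau}|e_{j\tau}|\to 0$ a.s. Hence the error term vanishes and $L_{it}\to\tfrac{1}{m}\sum_{j}\log\widetilde{\Lambda}_{j\theta}$, which exponentiates to the claim.

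For (ii) I would use the Bayesian duality $\hat{P}(\boldsymbol{\Omega}|\mathbf{r})/\hat{P}(\boldsymbol{\Omega}|\emptyset)=\hat{P}(\mathbf{r}|\boldsymbol{\Omega})/\hat{P}(\mathbf{r}|\emptyset)$ (obtained by expanding both sides through the joint Gaussian-gamma marginal), giving
\begin{equation*}
\Lambda_{jt}(\theta)=\frac{\hat{P}(\mathbf{r}_{j\theta}|\boldsymbol{\Omega}_{j1:t})}{\hat{P}(\mathbf{r}_{j\theta})}
\end{equation*}
with the denominator equal to the constant in~(\ref{eq:prior_prob}). Since $\boldsymbol{\Omega}_{j1:t}$ is i.i.d.\ Gaussian with parameters $(\mu_{j\theta^*},\lambda_{j\theta^*})$, the strong law of large numbers applied to the sufficient statistics that appear in the closed-form hyperparameter updates~(\ref{eq:ul_params}) drives the Gaussian-gamma posterior to concentrate almost surely on $(\mu_{j\theta^*},\lambda_{j\theta^*})$, so the posterior predictive evaluated at the fixed training vector $\mathbf{r}_{j\theta}$ converges to $\mathcal{N}(\mathbf{r}_{j\theta}|\mu_{j\theta^*},\lambda_{j\theta^*}^{-1})$ and~(\ref{eq:tilde_ULR}) follows. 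I expect the main obstacle to be upgrading the Gaussian-gamma posterior concentration from ``in probability'' to ``almost sure'' convergence while controlling the integrability of the predictive tails where $\lambda\to 0$ or $\lambda\to\infty$; a secondary technical point is ensuring the Toeplitz-style argument tolerates random and a priori unbounded early values of $e_{j\tau}$, which I would handle by observing that any finite truncation contributes at most $\rho^{t-T}$ times a (random but finite) constant and therefore vanishes as $t\to\infty$.
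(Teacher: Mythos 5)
Your proposal is correct and takes essentially the same route as the paper's proof: the paper also log-linearizes the dynamics, bounds the consensus error by $\sum_{\tau=0}^{t}\|\mathbf{A}^{t-\tau}-\tfrac{1}{m}\mathbf{1}\mathbf{1}'\|\,\|\log\boldsymbol{\ell}_{\theta}(\boldsymbol{\omega}_{\tau})\|$ using geometric mixing together with $\log\ell_{i\theta}(\omega_{it})\to 0$, invoking a Toeplitz-type result (Lemma 3.1 of~\cite{RNV2010}) where you give the splitting argument inline. Your step (ii) is precisely the paper's Lemma~\ref{lem:ULR}, proved there by the same Bayes-rule rewrite of $\Lambda_{i\theta}(t)$ as a posterior expectation of $\mathcal{N}(\mathbf{r}_{i\theta}|\mu,\lambda^{-1})/\hat{P}(\mathbf{r}_{i\theta})$ followed by almost-sure Gaussian-gamma posterior concentration via the strong law of large numbers.
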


Theorem~\ref{th:LL} states that the beliefs converge to the geometric average of the agents' asymptotic uncertain likelihood ratio, which is the likelihood of the prior evidence conditioned on the true parameters, normalized by the total probability of the prior evidence. When the agents have a finite amount of prior evidence, the beliefs converge to a finite value, i.e. $\mu_{it}(\theta)\in(0,\infty)$, where a value much greater than $1$ indicates that the prior evidence is consistent with the ground truth. However, when the amount of prior evidence grows unboundedly, the agents beliefs have the following properties. 

\begin{theorem} \label{cor:dogmatic}
Let Assumption~\ref{assum:graph} 
hold and every agents' amount of prior evidence grows unboundedly. Then, the belief generated using the update rule (\ref{eq:LL_rule}) have the following properties:
\begin{align} 
&\lim_{t\to\infty, R_{i\theta}\to\infty} \mu_{it}(\theta) \rightarrow \infty, \ \text{if $\theta\in\boldsymbol{\Theta}_j^*$ $\forall j \in \mathcal{M}$, and } \nonumber \\
&\lim_{t\to\infty, R_{i\theta}\to\infty} \mu_{it}(\theta) = 0, \ \text{if $\exists j\in \mathcal{M}$ s.t. $\theta\notin\boldsymbol{\Theta}_j^*$},
\end{align}
with probability 1 and in probability respectively. 
\end{theorem}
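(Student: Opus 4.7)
The plan is to reduce Theorem~\ref{cor:dogmatic} to an asymptotic study of a single scalar factor via Theorem~\ref{th:LL}. Assumption~\ref{assum:graph} already gives
$$\lim_{t\to\infty}\mu_{it}(\theta)=\Bigl(\prod_{j=1}^m\widetilde{\Lambda}_{j\theta}\Bigr)^{1/m}\quad\text{a.s.},$$
so it suffices to analyze each $\widetilde{\Lambda}_{j\theta}$ as $R_{j\theta}\to\infty$ with $r_{jk\theta}\sim\mathcal{N}(\mu_{j\theta},\lambda_{j\theta}^{-1})$ (not $\mathcal{N}(\mu_{j\theta^*},\lambda_{j\theta^*}^{-1})$ in general), and then propagate through the geometric mean. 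Since the geometric mean of $m$ numbers diverges to $+\infty$ when all factors do, and collapses to $0$ as soon as one factor does so (exponentially in the relevant parameter), it is enough to handle a single agent $j$.

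For the \emph{distinguishable} case ($\theta\notin\boldsymbol{\Theta}_j^*$ for some $j$), I would apply the SLLN to the log-numerator of $\widetilde{\Lambda}_{j\theta}$, giving
$$\tfrac{1}{R_{j\theta}}\log\mathcal{N}(\mathbf{r}_{j\theta}\mid\mu_{j\theta^*},\lambda_{j\theta^*}^{-1})\xrightarrow{a.s.}\mathbb{E}_{P_{j\theta}}[\log P_{j\theta^*}]=-H(P_{j\theta})-D_{KL}(P_{j\theta}\|P_{j\theta^*}).$$
For the denominator $\hat{P}(\mathbf{r}_{j\theta})$ in (\ref{eq:prior_prob}), I would apply Stirling's formula to $\Gamma(\alpha_{R_{j\theta}})$, use $\kappa_{R_{j\theta}}\sim R_{j\theta}$ and $\beta_{R_{j\theta}}\sim \tfrac{R_{j\theta}}{2}s_{j\theta}^2$, and invoke the SLLN for $s_{j\theta}^2$ and $\bar r_{j\theta}$, yielding $\tfrac{1}{R_{j\theta}}\log\hat{P}(\mathbf{r}_{j\theta})\to -H(P_{j\theta})$. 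Subtracting, $\tfrac{1}{R_{j\theta}}\log\widetilde{\Lambda}_{j\theta}\to -D_{KL}(P_{j\theta}\|P_{j\theta^*})<0$, so $\widetilde{\Lambda}_{j\theta}\to 0$ exponentially fast, and the geometric mean tends to $0$ in probability (in the joint $(t,R_{i\theta})$ limit, the ``in probability'' mode arises because Theorem~\ref{th:LL} is invoked for fixed $R_{j\theta}$ before letting $R_{j\theta}\to\infty$).

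For the \emph{indistinguishable} case ($\theta\in\boldsymbol{\Theta}_j^*$ for all $j$), the leading $O(R_{j\theta})$ terms in numerator and denominator are identical, so $D_{KL}=0$ and the preceding argument only shows $\tfrac{1}{R_{j\theta}}\log\widetilde{\Lambda}_{j\theta}\to 0$. I would therefore push the expansion to subleading order: apply Stirling to the ratio $\Gamma(\alpha_{R_{j\theta}})/\beta_{R_{j\theta}}^{\alpha_{R_{j\theta}}}$, Taylor-expand $\beta_{R_{j\theta}}^{\alpha_{R_{j\theta}}}$ around the true $\sigma_{j\theta^*}^2$, and use the representation $\sum_k(r_{jk\theta}-\mu_{j\theta^*})^2=R_{j\theta}s_{j\theta}^2+R_{j\theta}(\bar r_{j\theta}-\mu_{j\theta^*})^2$ to match the numerator against the denominator term by term. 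The $O(\sqrt{R_{j\theta}})$ fluctuations coming from the sample mean and sample variance cancel between numerator and denominator, while the $(\kappa_0/\kappa_{R_{j\theta}})^{1/2}$ prefactor in (\ref{eq:prior_prob}) contributes a factor that grows like $R_{j\theta}^{1/2}$ to $\widetilde{\Lambda}_{j\theta}$. This yields $\log\widetilde{\Lambda}_{j\theta}=\tfrac{1}{2}\log R_{j\theta}+O(1)$ almost surely, which diverges and does so with probability $1$.

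The main obstacle will be the indistinguishable case: the first-order term vanishes by construction, and one has to track the cancellations between the numerator $\mathcal{N}(\mathbf{r}_{j\theta}\mid\mu_{j\theta^*},\lambda_{j\theta^*}^{-1})$, the $\Gamma$-factors, and the powers of $\beta_{R_{j\theta}}$ to the correct precision in $R_{j\theta}^{-1/2}$. Showing that the $O(\sqrt{R_{j\theta}})$ stochastic corrections cancel almost surely (rather than just in expectation) requires using the explicit algebraic identity above for the numerator's quadratic form and bounding the resulting Taylor remainders with the SLLN for $s_{j\theta}^2$ and the law of the iterated logarithm (or a direct $O_{a.s.}(\log R_{j\theta}/R_{j\theta})$ bound) for $\bar r_{j\theta}-\mu_{j\theta^*}$. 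Once the single-agent asymptotics are settled, combining factors through the geometric mean and the continuous mapping theorem closes both parts of Theorem~\ref{cor:dogmatic}.
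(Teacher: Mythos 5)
Your proposal proves the theorem as an iterated limit --- first $t\to\infty$, then $R_{j\theta}\to\infty$: invoke Theorem~\ref{th:LL} for fixed prior evidence, then do SLLN/Stirling asymptotics on the limiting constant $\widetilde{\Lambda}_{j\theta}$. The paper argues this way only for the \emph{first} claim: its sketch in Section~\ref{sec:proofs} cites Corollary~\ref{cor:ULR_Dog} (divergence of the asymptotic uncertain likelihood ratio, which your Stirling computation essentially re-derives, with rates), Corollary~\ref{cor:ell_matching}, and Lemma~\ref{lem:ell_bounded}, and reruns the consensus argument of Theorem~\ref{th:LL}. For the \emph{second} claim the paper takes the opposite order of limits: it sends $R_{i\theta}\to\infty$ first, so that by Lemma~\ref{lem:ulru_finite} the update $\ell_{i\theta}$ becomes the ratio $\mathcal{N}(\omega|\mu_{i\theta},\lambda_{i\theta})/\hat{P}(\omega|\boldsymbol{\Omega}_{i1:t-1})$, and then shows, via a three-term decomposition of $\log\boldsymbol{\mu}_t(\theta)=\sum_{\tau}\mathbf{A}^{t-\tau}\log\boldsymbol{\ell}_{\theta}(\boldsymbol{\omega}_\tau)$ together with Lemma~\ref{cor:exp_log_ell}, that the log-beliefs acquire a per-step drift of $-D_{KL}(\mathcal{N}(\cdot|\mu_{i\theta^*},\lambda_{i\theta^*}^{-1})\|\mathcal{N}(\cdot|\mu_{i\theta},\lambda_{i\theta}^{-1}))$ and hence go to $-\infty$ linearly in $t$. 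Note the two routes even use opposite KL divergences: yours is $D_{KL}(P_{j\theta}\|P_{j\theta^*})$ (randomness in the evidence), the paper's is $D_{KL}(P_{j\theta^*}\|P_{j\theta})$ (randomness in the measurements). Your route buys a cleaner reduction to already-established statements; the paper's route addresses the regime its text actually advertises (``infinite prior evidence (i.e., precise models) as $t\to\infty$''), which is the one recovering classical non-Bayesian social learning, and it is why the second claim is stated only \emph{in probability}: there the LLN is applied to the dependent terms $\log\ell_{i\theta}(\omega_{i\tau}|\boldsymbol{\Omega}_{i1:\tau-1})$.

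Two caveats. First, the order of limits is not cosmetic: what you prove is $\lim_{R\to\infty}\lim_{t\to\infty}\mu_{it}(\theta)$, whereas the paper's proof of the second claim concerns $\lim_{t\to\infty}$ after setting $R_{i\theta}=\infty$; neither argument alone gives a genuinely joint limit, and your parenthetical explanation that the ``in probability'' mode ``arises because Theorem~\ref{th:LL} is invoked for fixed $R_{j\theta}$'' is not right --- your iterated limit is in fact almost sure (SLLN in the evidence), and the weaker mode in the theorem comes from the paper's own, different route. Second, in the indistinguishable case your subleading term is off: the model has \emph{two} free parameters, so besides the $\tfrac{1}{2}\log R_{j\theta}$ contributed by $(\kappa_0/\kappa_{R_{j\theta}})^{1/2}$ there is another $\tfrac{1}{2}\log R_{j\theta}$ from Stirling applied to $\Gamma(\alpha_{R_{j\theta}})/\beta_{R_{j\theta}}^{\alpha_{R_{j\theta}}}$, giving $\log\widetilde{\Lambda}_{j\theta}=\log R_{j\theta}+O(\log\log R_{j\theta})$ almost surely (the remainder is LIL-sized, not $O(1)$). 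The conclusion $\widetilde{\Lambda}_{j\theta}\to\infty$ survives, so this is a quantitative slip rather than a fatal one, but the ledger of $\sqrt{R_{j\theta}}$-cancellations you describe must be kept with both parameters in play.
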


Given Assumption~\ref{as:dis}, when the agents' likelihood models use an infinite amount of prior evidence, the agents' beliefs for a hypothesis that is not the ground truth will converge to zero via Theorem~\ref{cor:dogmatic}. Likewise, the belief in the ground truth hypotheses goes diverges to infinity. An outline of the proofs for Theorem~\ref{th:LL} and Theorem~\ref{cor:dogmatic} are presented in Section~\ref{sec:proofs}. 

Given that the above properties hold for update rule (\ref{eq:LL_rule}), the agents can use their beliefs to determine if there is sufficient evidence to accept or reject a hypothesis $\theta$. As the amount of prior evidence goes to infinity, only the ground truth hypothesis will be accepted while the others are rejected, which is consistent with traditional non-Bayesian social learning theory.

\section{Gaussian Uncertain Models} \label{sec:ulm}
In this section, we derive the Gaussian uncertain likelihood ratio, discuss the uncertain likelihood ratio test, and present the asymptotic properties of the uncertain likelihood ratio.

\subsection{Uncertain Likelihood Ratio} \label{sec:ulr}
As stated in Section~\ref{sec:pfam}, each agent $i$ has collected a set of prior evidence $\mathbf{r}_{i\theta}$ for each hypothesis $\theta\in\boldsymbol{\Theta}$ to estimate the distribution of $\mu$ and $\lambda$ in the training phase. This is achieved by computing the posterior conjugate distribution of $\mu$ and $\lambda$ conditioned on the prior evidence $\mathbf{r}_{i\theta}$ as follows.
\begin{align}
f(\mu,\lambda|\mathbf{r}_{i\theta}) &= \frac{1}{\hat{P}(\mathbf{r}_{i\theta})}\mathcal{N}(\mathbf{r}_{i\theta}|\mu,\lambda)\mathcal{N}\mathcal{G}\left(\mu,\lambda |\boldsymbol{\phi}_0\right) \nonumber \\
&= \mathcal{N}\mathcal{G}\left(\mu,\lambda |\boldsymbol{\phi}_{R_{i\theta}}\right),
\end{align}
where $\mathcal{N}\mathcal{G}\left(\mu,\lambda |\boldsymbol{\phi}_0\right)$ is a noninformative conjugate prior with parameters $\boldsymbol{\phi}_0 = \{\mu_0,\kappa_0,\alpha_0,\beta_0\}$\footnote{With an abuse of notation, throughout this work we will use $\boldsymbol{\phi}$ to represent the parameters of the Gaussian-gamma distribution.} and $\hat{P}(\mathbf{r}_{i\theta})$ is the total probability of the prior evidence provided in (\ref{eq:prior_prob}); the posterior distribution parameters in $\boldsymbol{\phi}_{R_{i\theta}}=\{\mu_{R_{i\theta}}, \kappa_{R_{i\theta}}, \alpha_{R_{i\theta}}, \beta_{R_{i\theta}}\}$ are $\mu_{R_{i\theta}} = ({\kappa_0 \mu_0 + R_{i\theta} \bar{r}_{i\theta}})/({\kappa_0+R_{i\theta}})$ and (\ref{eq:ul_params}).

The parameters of the prior distribution $\mathcal{N}\mathcal{G}\left(\mu,\lambda |\boldsymbol{\phi}_0\right)$ are ideally chosen to be noninformative. A common approach in the literature is to use Jeffreys prior \cite{J1998}, which suggests to set $\mu_0=0$, $\kappa_0 = 0$, $\alpha_0 = 0$, $\beta_0=0$ to assign a uniform distribution over the parameter space. However, this would lead to an improper posterior conjugate prior and cannot be chosen. In this work, we chose to utilize $\mu_0 = 0$, $\kappa_0=1$, $\alpha_0 = 1$, and $\beta_0 = 1$ based on an empirical analysis that found that smaller values of $\kappa_0$, $\alpha_0$, and $\beta_0$ cause the beliefs for hypothesis $\theta\ne\theta^*$ at time $t=1$ to jump to a value $\gg1$, requiring a larger amount of prior evidence to reject the hypothesis. A detailed analysis of the parameter effects on the overall inference will be studied as a future work. 

Next, the agent collects a sequence of measurements $\boldsymbol{\Omega}_{1:t}=\{\omega_1,...,\omega_t\}$ in the testing phase and computes the uncertain likelihood. Following the derivation in \cite{M2007}, the uncertain likelihood is modeled as the posterior predictive distribution of the measurement sequence conditioned on the prior evidence, 
\begin{align} \label{eq:UL}
 \hat{P}(\boldsymbol{\Omega}_{i1:t}|\mathbf{r}_{i\theta}) &= \int_0^\infty \int_{\mathbb{R}}  \mathcal{N}(\boldsymbol{\Omega}_{i1:t}|\mu,\lambda)f(\mu,\lambda|\mathbf{r}_{i\theta})d\mu d\lambda \nonumber \\
 &=\frac{\Gamma(\alpha_{R_{i\theta}+t}) \beta_{R_{i\theta}}^{\alpha_{R_{i\theta}}}(2\pi)^{-t/2}\kappa_{R_{i\theta}}^{1/2}}{\Gamma(\alpha_{R_{i\theta}})\beta_{R_{i\theta}+t}^{\alpha_{R_{i\theta}+t}}\kappa_{R_{i\theta}+t}^{1/2}},
\end{align}
where the prior parameters are provided in (\ref{eq:ul_params}) and
\begin{align} \label{eq:ell_params}
    & \mu_{R_{i\theta}+t} = \frac{\kappa_{R_{i\theta}}\mu_{R_{i\theta}} + t \bar{\omega}_{it}}{\kappa_{R_{i\theta}}+t}, \ \kappa_{R_{i\theta}+t} = \kappa_0 +R_{i\theta}+t \nonumber \\
    & \alpha_{R_{i\theta} + t} = \alpha_0 + \frac{R_{i\theta}+t}{2}, \nonumber \\
    & \beta_{R_{i\theta}+t} = \beta_{R_{i\theta}} + \frac{s_{it}-t\bar{\omega}_{it}^2}{2} + \frac{\kappa_{R_{i\theta}}t(\bar{\omega}_{it}-\mu_{R_{i\theta}})^2}{2\kappa_{R_{i\theta}+t}}
\end{align}
with $s_{it} = s_{it-1} + \omega_{it}^2$ and $\bar{\omega}_{it} = (\bar{\omega}_{it-1}(t-1)+\omega_{it})/t$ s.t. $s_{i0}=0$ and $\bar{\omega}_{i0} = 0$. This model can be thought of as the expected value of the likelihood of the measurement sequence $\mathcal{N}(\boldsymbol{\Omega}_{i1:t}|\mu,\lambda)$ taken over the prior distribution $f(\mu,\lambda|\mathbf{r}_{i\theta})$, i.e., $\hat{P}(\boldsymbol{\Omega}_{i1:t}|\mathbf{r}_{i\theta}) = \mathbb{E}_{f(\mu,\lambda|\mathbf{r}_{i\theta})}[\mathcal{N}(\boldsymbol{\Omega}_{i1:t}|\mu,\lambda)]$. When the agent has $R_{i\theta}<\infty$ and the number of observations grows unboundedly, the distribution $\hat{P}(\boldsymbol{\Omega}_{i1:t}|\mathbf{r}_{i\theta})$ eventually becomes $P_{i\theta^*}$ with probability $1$ due to the strong law of large numbers, as seen in Fig.~\ref{fig:simplex}. While when the amount of prior evidence grows unboundedly,  $\hat{P}(\cdot|\mathbf{r}_{i\theta})=\mathcal{N}(\cdot|\mu_{i\theta},\lambda_{i\theta}^{-1})$ with probability $1$  and remains a fixed point in Fig.~\ref{fig:simplex} $\forall t\ge 1$. 

As shown in \cite{HULJ2019_TSP} and stated in Section~\ref{sec:pfam}, hypotheses with varying amounts of prior evidence are incommensurable and must be evaluated on their own merit. Thus, the uncertain likelihood \eqref{eq:UL} is normalized by the model of complete ignorance, i.e., the uncertain likelihood with zero prior evidence, to form the uncertain likelihood ratio, 
\begin{align} \label{eq:ULR}
\Lambda_{i\theta}(t) &= \frac{\hat{P}(\boldsymbol{\Omega}_{i1:t}|\mathbf{r}_{i\theta})}{\hat{P}(\boldsymbol{\Omega}_{i1:t}|\mathbf{r}_{i\theta}=\emptyset)} \nonumber \\
&= \frac{\Gamma(\alpha_0)\Gamma(\alpha_{R_{i\theta}+t}) \beta_t^{\alpha_t}\beta_{R_{i\theta}}^{\alpha_{R_{i\theta}}}}{\Gamma(\alpha_t)\Gamma(\alpha_{R_{i\theta}})\beta_0^{\alpha_0}\beta_{R_{i\theta}+t}^{\alpha_{R_{i\theta}+t}}}\left( \frac{\kappa_{R_{i\theta}}\kappa_t}{\kappa_{R_{i\theta}+t}\kappa_0}\right)^\frac{1}{2}, 
\end{align}
where 
\begin{align} \label{eq:ulr_params}
    & \kappa_t = \kappa_0 + t, \ \ \ \ \ \alpha_t = \alpha_0 + \frac{t}{2}, \nonumber \\
    & \beta_t = \beta_{0} + \frac{s_{it}-t\bar{\omega}_{it}^2}{2} +\frac{\kappa_0t(\bar{\omega}_{it}-\mu_0)^2}{2\kappa_t}.
\end{align}

The model of complete ignorance represents the expected value of $\mathcal{N}(\boldsymbol{\Omega}_{i1:t}|\mu,\lambda)$ taken over a noninformatative Gaussian-gamma distribution, i.e., $\hat{P}(\boldsymbol{\Omega}_{i1:t}|\mathbf{r}_{i\theta}=\emptyset) = \mathbb{E}_{\mathcal{N}\mathcal{G}(\mu,\lambda|\boldsymbol{\phi}_0)}[\mathcal{N}(\boldsymbol{\Omega}_{i1:t}|\mu,\lambda)]$. Just like the uncertain likelihood, $\hat{P}(\boldsymbol{\Omega}_{i1:t}|\mathbf{r}_{i\theta}=\emptyset)$ eventually collapses to $\mathcal{N}(\cdot|\mu_{i\theta^*},\lambda_{i\theta^*}^{-1})$ with probability $1$ as seen in Fig.~\ref{fig:simplex}.

Then, the agent can infer if the measurement sequence is consistent with the prior evidence collected for hypothesis $\theta$ by utilizing an uncertain likelihood ratio test based on the following insights:
	\begin{enumerate}
		\item If $\Lambda_{\theta}(t)$ 
converges to a value above one, there is evidence to accept that $\theta$ is consistent with $\theta^*$.  Higher values indicate more evidence to accept $\theta$ as $\theta^*$.
		\item If $\Lambda_{\theta}(t)$ converges to a value below one, there is evidence to reject that $\theta$ is $\theta^*$. 
		Lower values indicate more evidence to reject $\theta$ as $\theta^*$. 
		\item  If $\Lambda_{\theta}(t)$ converges to a value near one, there is not enough evidence to accept or reject $\theta$ as $\theta^*$.
	\end{enumerate}
	
As a practical matter, one can define a threshold $\upsilon>1$ so that the hypothesis is deemed accepted, rejected or unsure if $\Lambda_\theta(t) \geq \upsilon$, $\Lambda_\theta(t) < 1/\upsilon$ and $1/\upsilon \leq \Lambda_\theta(t) < \upsilon$, respectively. The exact choice of thresholds is application dependent to balance the number of false positives and false negatives. 

\subsection{Properties of the uncertain likelihood ratio}
Next, we provide the properties of the Gaussian uncertain likelihood ratio that are necessary for our main results. 
\begin{lemma} \label{lem:ULR}
The uncertain likelihood ratio (\ref{eq:ULR}) of hypothesis $\theta$ converges to $\widetilde{\Lambda}_{i\theta}$ with probability 1 as $t\to\infty$, where $\widetilde{\Lambda}_{i\theta}$ is the asymptotic uncertain likelihood ratio (\ref{eq:tilde_ULR}).
\end{lemma}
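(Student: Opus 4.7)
The plan is to rewrite $\Lambda_{i\theta}(t)$ as a posterior-predictive ratio under the noninformative conjugate prior and then analyze its asymptotics explicitly. Viewing every probability as induced by the prior $\mathcal{N}\mathcal{G}(\mu,\lambda|\boldsymbol{\phi}_0)$, a direct application of Bayes' rule yields
\begin{equation*}
\Lambda_{i\theta}(t) \;=\; \frac{\hat{P}(\boldsymbol{\Omega}_{i1:t}|\mathbf{r}_{i\theta})}{\hat{P}(\boldsymbol{\Omega}_{i1:t}|\emptyset)} \;=\; \frac{\hat{P}(\mathbf{r}_{i\theta}|\boldsymbol{\Omega}_{i1:t})}{\hat{P}(\mathbf{r}_{i\theta})},
\end{equation*}
where the numerator on the right is the posterior predictive of the (fixed) training sample $\mathbf{r}_{i\theta}$ given the measurement sequence, and the denominator is the constant $\hat{P}(\mathbf{r}_{i\theta})$ of~(\ref{eq:prior_prob}). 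It therefore suffices to prove that $\hat{P}(\mathbf{r}_{i\theta}|\boldsymbol{\Omega}_{i1:t}) \to \mathcal{N}(\mathbf{r}_{i\theta}|\mu_{i\theta^*},\lambda_{i\theta^*}^{-1})$ almost surely as $t\to\infty$.

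I would then show that the posterior Gaussian--gamma $\mathcal{N}\mathcal{G}(\mu,\lambda|\boldsymbol{\phi}_t)$ concentrates on the true parameters $(\mu_{i\theta^*},\lambda_{i\theta^*})$. Applying the strong law of large numbers to the i.i.d.\ observations $\omega_{i\tau}\sim \mathcal{N}(\mu_{i\theta^*},\lambda_{i\theta^*}^{-1})$ gives $\bar{\omega}_{it}\to \mu_{i\theta^*}$ and $(s_{it}-t\bar{\omega}_{it}^2)/t\to \lambda_{i\theta^*}^{-1}$ almost surely. Substituting into~(\ref{eq:ulr_params}) we obtain $\kappa_t\to\infty$, $\mu_t\to \mu_{i\theta^*}$, $\alpha_t/t\to 1/2$, and $\beta_t/t\to \lambda_{i\theta^*}^{-1}/2$ almost surely; in particular $\alpha_t/\beta_t\to \lambda_{i\theta^*}$ while the posterior variance of $(\mu,\lambda)$ vanishes.

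To convert this concentration into convergence of the predictive I would invoke the closed form analogous to~(\ref{eq:UL}) with the roles of $t$ and $R_{i\theta}$ swapped, treating $\boldsymbol{\phi}_t$ as the ``prior'':
\begin{equation*}
\hat{P}(\mathbf{r}_{i\theta}|\boldsymbol{\Omega}_{i1:t}) = \frac{\Gamma(\alpha_{t+R_{i\theta}})\,\beta_t^{\alpha_t}\,(2\pi)^{-R_{i\theta}/2}\,\kappa_t^{1/2}}{\Gamma(\alpha_t)\,\beta_{t+R_{i\theta}}^{\alpha_{t+R_{i\theta}}}\,\kappa_{t+R_{i\theta}}^{1/2}}.
\end{equation*}
Using $\Gamma(x+c)/\Gamma(x)\sim x^c$ with $c=R_{i\theta}/2$, the ratio $\kappa_t/\kappa_{t+R_{i\theta}}\to 1$, and the limit $\beta_{t+R_{i\theta}}-\beta_t \to \tfrac{1}{2}\sum_k(r_{ik\theta}-\mu_{i\theta^*})^2$ (which follows from the parallel-axis identity $\sum_k(r_{ik\theta}-\bar{r}_{i\theta})^2 + R_{i\theta}(\bar{r}_{i\theta}-\mu_{i\theta^*})^2 = \sum_k(r_{ik\theta}-\mu_{i\theta^*})^2$ together with $\mu_t\to\mu_{i\theta^*}$), the factor $(1+(\beta_{t+R_{i\theta}}-\beta_t)/\beta_t)^{-\alpha_{t+R_{i\theta}}}$ converges to $\exp(-\tfrac{\lambda_{i\theta^*}}{2}\sum_k(r_{ik\theta}-\mu_{i\theta^*})^2)$. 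Collecting the surviving polynomial factors into $(\lambda_{i\theta^*}/(2\pi))^{R_{i\theta}/2}$ reproduces exactly $\mathcal{N}(\mathbf{r}_{i\theta}|\mu_{i\theta^*},\lambda_{i\theta^*}^{-1})$, as required.

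The main obstacle is the delicate cancellation in this last step: both $\alpha_t$ and $\beta_t$ diverge linearly in $t$, so any naive bound on $\beta_t^{\alpha_t}/\beta_{t+R_{i\theta}}^{\alpha_{t+R_{i\theta}}}$ either collapses to $0$ or blows up. One must track the Stirling and $\beta$ subleading corrections simultaneously so that a $(1+O(1/t))$-type base combined with a diverging exponent yields exactly the finite exponential factor above. A conceptually cleaner route via posterior concentration plus dominated convergence is possible, but because $\mathcal{N}(\mathbf{r}_{i\theta}|\mu,\lambda^{-1})$ is unbounded in $\lambda$ it still requires a uniform-integrability estimate, which is readily supplied by moment bounds on the Gaussian--gamma tails.
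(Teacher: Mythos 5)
Your proposal is correct, and it starts exactly where the paper's proof starts: both of you use Bayes' rule under the noninformative conjugate prior to recast the ratio (\ref{eq:ULR}) as $\Lambda_{i\theta}(t)=\hat{P}(\mathbf{r}_{i\theta}|\boldsymbol{\Omega}_{i1:t})/\hat{P}(\mathbf{r}_{i\theta})$ (the paper writes the numerator as $\int_0^\infty\int_{\mathbb{R}}\mathcal{N}(\mathbf{r}_{i\theta}|\mu,\lambda^{-1})\,\mathcal{N}\mathcal{G}(\mu,\lambda|\boldsymbol{\Omega}_{i1:t})\,d\mu\,d\lambda$, which is the same object), and both rely on the same strong-law facts about $\bar{\omega}_{it}$ and the sample variance. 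Where you genuinely diverge is in how the limit is extracted. The paper argues that the posterior $\mathcal{N}\mathcal{G}(\mu,\lambda|\boldsymbol{\Omega}_{i1:t})$ has means converging to $(\mu_{i\theta^*},\lambda_{i\theta^*})$ and vanishing variances, hence ``collapses to a Dirac-delta,'' and then simply evaluates the integrand at that point to get $\widetilde{\Lambda}_{i\theta}$ in (\ref{eq:tilde_ULR}). That step is conceptually clean but informal: weak convergence of the posterior to a point mass licenses passing the limit inside the integral only for bounded continuous integrands, and $(\mu,\lambda)\mapsto\mathcal{N}(\mathbf{r}_{i\theta}|\mu,\lambda^{-1})$ is unbounded in $\lambda$ in degenerate cases (e.g. $R_{i\theta}=1$), so strictly speaking a uniform-integrability or tail estimate is being elided — precisely the gap your closing paragraph identifies. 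Your route instead works with the closed-form predictive (the analogue of (\ref{eq:UL}) with the roles of $t$ and $R_{i\theta}$ swapped, which is legitimate by the order-invariance of conjugate updating) and carries out the asymptotics explicitly: Wendel's limit $\Gamma(x+c)/\Gamma(x)\sim x^c$, the parallel-axis identity for the increment $\beta_{t+R_{i\theta}}-\beta_t$, and the $(1+\Delta_t/\beta_t)^{-\alpha_{t+R_{i\theta}}}\to\exp(-\lambda_{i\theta^*}\Delta)$ limit, whose product indeed reassembles $\mathcal{N}(\mathbf{r}_{i\theta}|\mu_{i\theta^*},\lambda_{i\theta^*}^{-1})$. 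So the two proofs buy different things: the paper's is shorter and geometrically transparent (it matches the Dirac-delta picture used throughout Section~\ref{sec:ulm}), while yours is computational but self-contained at the stated level of rigor, makes the exact cancellation between the diverging $\Gamma$ and $\beta^{\alpha}$ factors visible, and sidesteps the interchange-of-limits issue entirely.
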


\begin{proof}
First, we note that the denominator in (\ref{eq:ULR}) is actually the total probability of the measurement sequence, i.e.,
\begin{align}
\hat{P}(\boldsymbol{\Omega}_{i1:t}) = \int_0^\infty \int_{\mathbb{R}} \mathcal{N}(\boldsymbol{\Omega}_{i1:t}|\mu,\lambda^{-1})\mathcal{N}\mathcal{G}\left(\mu,\lambda |\boldsymbol{\phi}_0\right) d\mu d\lambda. \nonumber
\end{align}
Then, utilizing Bayes rule, we can express (\ref{eq:ULR}) as
\begin{align}
\Lambda_{i\theta}(t) = \int_0^\infty \int_{\mathbb{R}} \frac{\mathcal{N}\mathcal{G}(\mu,\lambda|\boldsymbol{\Omega}_{i1:t})\mathcal{N}(\mathbf{r}_{i\theta}|\mu,\lambda^{-1})}{\hat{P}(\mathbf{r}_{i\theta})}d\mu d\lambda, \nonumber
\end{align}
where we used the fact that
\begin{align}
\mathcal{N}\mathcal{G}(\mu,\lambda|\boldsymbol{\Omega}_{i1:t}) = \frac{\mathcal{N}(\boldsymbol{\Omega}_{i1:t}|\mu,\lambda^{-1})\mathcal{N}\mathcal{G}\left(\mu,\lambda |\boldsymbol{\phi}_0\right)}{\hat{P}(\boldsymbol{\Omega}_{i1:t})}. \nonumber
\end{align}

Then, as the number of measurements grows unboundedly, the means of $\mathcal{N}\mathcal{G}(\mu,\lambda|\boldsymbol{\Omega}_{i1:t})$ are $\lim_{t\to\infty} \mathbb{E}[\mu] = \mu_{i\theta^*}$, and $\lim_{t\to\infty} \mathbb{E}[\lambda] = \lim_{t\to\infty} (\alpha_{t})/(\beta_{t}) = \lambda_{i\theta^*}$,
while the variances are $\lim_{t\to\infty} var(\mu) = \lim_{t\to\infty} (\beta_t)/(\kappa_t(\alpha_t - 1)) =  0$, and $\lim_{t\to\infty} var(\lambda)= \lim_{t\to\infty} (\alpha_t)/(\beta_t^2) =  0$ with probability $1$ due to the strong law of large numbers. This means that $\mathcal{N}\mathcal{G}(\mu,\lambda|\boldsymbol{\Omega}_{1:t})$ collapses to a Dirac-delta function centered at the means as time goes to infinity, i.e., $\delta(\mu-\mu_{i\theta^*}, \lambda - \lambda_{i\theta^*})$. Thus, $
\lim_{t\to\infty} \Lambda_{i\theta}(t) =  {\mathcal{N}(\mathbf{r}_{i\theta}|\mu_{i\theta^*},\lambda_{i\theta^*}^{-1})}/{\hat{P}(\mathbf{r}_{i\theta})}$
with probability 1. 
\end{proof}

This result then leads to the following corollary when the amount of prior evidence collected grows unboundedly. 

\begin{corollary} \label{cor:ULR_Dog}
When the amount of prior evidence grows unboundedly, the uncertain likelihood ratio (\ref{eq:ULR}) of hypothesis $\theta$ has the following property:
\begin{align}
&\lim_{R_{i\theta} \to\infty} \widetilde{\Lambda}_{i\theta} \to \infty, \ \text{if $\mu_{i\theta} = \mu_{i\theta^*}$ and $\lambda_{i\theta} = \lambda_{i\theta^*}$, and} \nonumber \\
&\lim_{R_{i\theta} \to\infty} \widetilde{\Lambda}_{i\theta} = 0, \ \text{if either $\mu_{i\theta} \ne \mu_{i\theta^*}$ or $\lambda_{i\theta} \ne \lambda_{i\theta^*}$.}
\end{align}
\end{corollary}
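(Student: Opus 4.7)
My plan is to pass to logarithms and write
\begin{equation*}
\log\widetilde{\Lambda}_{i\theta} = \log\mathcal{N}(\mathbf{r}_{i\theta}|\mu_{i\theta^*},\lambda_{i\theta^*}^{-1}) - \log\hat{P}(\mathbf{r}_{i\theta}),
\end{equation*}
then extract from each side a linear-in-$R_{i\theta}$ piece together with subleading logarithmic and stochastic corrections. The sign of the linear piece will decide both cases; the logarithmic piece will be needed only when the linear piece vanishes.

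For the numerator, the $r_{ik\theta}$ are i.i.d.\ from $P_{i\theta}=\mathcal{N}(\cdot|\mu_{i\theta},\lambda_{i\theta}^{-1})$, so the strong law of large numbers applied to the sum $\sum_{k=1}^{R_{i\theta}}\log\mathcal{N}(r_{ik\theta}|\mu_{i\theta^*},\lambda_{i\theta^*}^{-1})$ yields, almost surely,
\begin{equation*}
\tfrac{1}{R_{i\theta}}\log\mathcal{N}(\mathbf{r}_{i\theta}|\mu_{i\theta^*},\lambda_{i\theta^*}^{-1}) \to -h(P_{i\theta}) - D_{KL}(P_{i\theta}\|P_{i\theta^*}),
\end{equation*}
where $h$ denotes differential entropy. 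For the denominator I substitute (\ref{eq:prior_prob}) with the chosen parameters $\mu_0{=}0$, $\kappa_0{=}\alpha_0{=}\beta_0{=}1$, apply Stirling's expansion to $\Gamma(1+R_{i\theta}/2)$, and invoke the SLLN convergences $\bar{r}_{i\theta}\to\mu_{i\theta}$ and $s_{i\theta}^2\to 1/\lambda_{i\theta}$ together with (\ref{eq:ul_params}) to simplify $\beta_{R_{i\theta}}\sim R_{i\theta}/(2\lambda_{i\theta})$. A routine computation then gives, almost surely,
\begin{equation*}
\log\hat{P}(\mathbf{r}_{i\theta}) = -R_{i\theta}\,h(P_{i\theta}) - c\log R_{i\theta} + O(1),
\end{equation*}
with $c>0$ the Laplace/BIC coefficient for the two-parameter Gaussian family.

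Combining the two expansions, $\log\widetilde{\Lambda}_{i\theta} = -R_{i\theta}\,D_{KL}(P_{i\theta}\|P_{i\theta^*}) + c\log R_{i\theta} + O(1)$ almost surely. When $(\mu_{i\theta},\lambda_{i\theta})\ne(\mu_{i\theta^*},\lambda_{i\theta^*})$ the KL divergence is strictly positive, the linear term dominates, and $\widetilde{\Lambda}_{i\theta}\to 0$. When the parameters agree the linear term vanishes identically, the true density and the marginal likelihood share the same leading entropy rate, and the surviving positive $c\log R_{i\theta}$ drives $\widetilde{\Lambda}_{i\theta}\to\infty$.

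The main obstacle is controlling the subleading terms in the matching case, where the linear part is zero and a wrong-signed or vanishing logarithmic correction would flip the conclusion. Concretely, I will need to verify that after cancellation the Stirling expansion of $\log\Gamma(1+R_{i\theta}/2) - (1+R_{i\theta}/2)\log\beta_{R_{i\theta}}$ yields a strictly positive coefficient in front of $\log R_{i\theta}$, and that the stochastic fluctuations of $\bar{r}_{i\theta}$ and $s_{i\theta}^2$ entering through $\beta_{R_{i\theta}}$ and through the true-density exponent are of bounded (or at worst $O(\log\log R_{i\theta})$ via the iterated-logarithm law) almost-sure magnitude, so that the deterministic $\log R_{i\theta}$ growth is not swamped and the divergence is genuine rather than merely in probability.
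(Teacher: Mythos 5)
Your plan is correct and reaches the right dichotomy, but by a genuinely different route than the paper. The paper's proof is structural and very short: multiplying and dividing (\ref{eq:tilde_ULR}) by the prior density $\mathcal{N}\mathcal{G}(\mu_{i\theta^*},\lambda_{i\theta^*}|\boldsymbol{\phi}_0)$ and applying Bayes' rule yields the Savage--Dickey-type identity $\widetilde{\Lambda}_{i\theta} = \mathcal{N}\mathcal{G}(\mu_{i\theta^*},\lambda_{i\theta^*}|\mathbf{r}_{i\theta})/\mathcal{N}\mathcal{G}(\mu_{i\theta^*},\lambda_{i\theta^*}|\boldsymbol{\phi}_0)$, a posterior-to-prior density ratio evaluated at the true parameters; since the posterior concentrates as a Dirac delta at $(\mu_{i\theta},\lambda_{i\theta})$ by the SLLN while the prior density is a fixed strictly positive constant, the ratio diverges in the matched case and vanishes otherwise. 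You instead expand $\log\widetilde{\Lambda}_{i\theta}$ directly, using the SLLN for the exact-likelihood term and Stirling/Laplace (BIC) asymptotics for the marginal likelihood, obtaining $-R_{i\theta}D_{KL}(P_{i\theta}\|P_{i\theta^*})+c\log R_{i\theta}$ plus fluctuations; your claimed structure is correct, with $c=d/2=1$ for the two-parameter family. Your route costs more computation but buys two things the paper's argument does not give: explicit rates (exponential decay governed by the KL divergence when mismatched; polynomial, order $R_{i\theta}$, divergence when matched), and rigor precisely where the paper's delta-function heuristic is thinnest. In the matched case the sample mean fluctuates about $\mu_{i\theta^*}$ at the iterated-logarithm scale $\sqrt{\log\log R_{i\theta}/R_{i\theta}}$, which is \emph{wider} than the posterior standard deviation $\sim R_{i\theta}^{-1/2}$, so the posterior density at the truth diverges only because $\log R_{i\theta}$ beats $\log\log R_{i\theta}$ --- exactly the race you flag as your ``main obstacle.'' That obstacle is not an artifact of your method; it is the genuine content of the matched case (and is silently absorbed by the paper's delta argument), and your LIL-based resolution of it is the right one. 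One caution on bookkeeping: the ``$+O(1)$'' in your combined expansion is too strong as stated, since the separate numerator and denominator expansions each carry $O(\sqrt{R_{i\theta}\log\log R_{i\theta}})$ almost-sure errors and only their \emph{difference} collapses to $O(\log\log R_{i\theta})$ after the entropy-rate terms cancel in the matched case (in the mismatched case the residual stays of order $\sqrt{R_{i\theta}\log\log R_{i\theta}}$, which is still dominated by the linear KL term); this does not affect either conclusion but should be written that way in the full proof.
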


\begin{proof}
First, (\ref{eq:ULR}) can be rewritten as 
\begin{align}
\lim_{t\to\infty, R_{i\theta}\to\infty} \Lambda_{i\theta}(t) = \lim_{R_{i\theta}\to\infty} \frac{\mathcal{N}\mathcal{G}(\mu_{i\theta^*},\lambda_{i\theta^*}|\mathbf{r}_{i\theta})}{\mathcal{N}\mathcal{G}(\mu_{i\theta^*},\lambda_{i\theta^*}|\boldsymbol{\phi}_0)}
\end{align}
where the right hand side was achieved by multiplying and dividing (\ref{eq:tilde_ULR}) by $\mathcal{N}\mathcal{G}(\mu_{i\theta^*},\lambda_{i\theta^*}|\boldsymbol{\phi}_0)$ and applying Bayes rule. Then, following the approach in the proof of Lemma~\ref{lem:ULR}, $\lim_{R_{i\theta}\to\infty}\mathcal{N}\mathcal{G}(\mu_{i\theta^*},\lambda_{i\theta^*}|\boldsymbol{\phi}_{R_{i\theta}})\to \delta(\mu_{i\theta} - \mu_{i\theta^*},\lambda_{i\theta} - \lambda_{i\theta^*})$ with probability $1$ due to the strong law of large numbers. Then, since $\mathcal{N}\mathcal{G}(\mu_{i\theta^*},\lambda_{i\theta^*}|\boldsymbol{\phi}_0)$ is a strictly positive distribution $\forall \mu$ and $\lambda$, $\Lambda_{i\theta}(t)$ will diverge to $\infty$ if $\mu_{i\theta}=\mu_{i\theta^*}$ and $\lambda_{i\theta}=\lambda_{i\theta^*}$, or converge to $0$ if either $\mu_{i\theta}\ne\mu_{i\theta^*}$ or $\lambda_{i\theta}\ne\lambda_{i\theta^*}$.
\end{proof}

Lemma~\ref{lem:ULR} and Corollary~\ref{cor:ULR_Dog} provide insights into where an individual agents uncertain likelihood ratio converges, which can be used to design $\upsilon$ in the uncertain likelihood ratio test. 

Furthermore, Corollary~\ref{cor:ULR_Dog} can visually be interpreted in Fig.~\ref{fig:simplex}, where as $R_{i\theta}\to\infty$, the uncertain distributions $\hat{P}(\cdot|\mathbf{r}_{\theta_1})$ and $\hat{P}(\cdot|\mathbf{r}_{\theta_2})$ are fixed points located at the solid shapes and $\hat{P}(\cdot|\emptyset)$ continues to follow its trajectory. For $\theta_2$ the expected $\Lambda_{i\theta}(t)$ will be greater than $1$ for all $t$ since it is always closer to $P_{\theta^*}$ than $\hat{P}(\cdot|\emptyset)$, causing it to diverge to $\infty$. Whereas for $\theta_1$, there is always going to be a finite time $T$ where $\forall t>T$, $\hat{P}(\cdot|\emptyset)$ is closer to $P_{\theta^*}$ than $P_{\theta_1}$. Thus, the expected $\Lambda_{i\theta}(t)$ will be less than $1$ and will eventually converge to $0$. 

\section{Gaussian Uncertain Likelihood Update} \label{sec:nbsl_gum}
In the previous section, the Gaussian uncertain model was presented where we assumed that an agent $i$ has received the entire measurement sequence up to time $t$, i.e., $\boldsymbol{\Omega}_{i1:t}$. However, in the social setting, each agent receives a new measurement $\omega_{it}$ at each time step $t$, requiring a recursive formulation of the $\Lambda_{i\theta}(t)$ that allows for new information. Particularly, we can express the uncertain likelihood at each time $t$ as follows:
\begin{eqnarray}\label{eq:tell}
\Lambda_{i\theta}(t) = \prod_{\tau=1}^t \frac{\Lambda_{i\theta}(\tau)}{\Lambda_{i\theta}(\tau-1)} = \prod_{\tau=1}^t \ell_{i\theta}(\omega_{i\tau}).
\end{eqnarray}

The uncertain likelihood update $\ell_{i\theta}(\omega_{it})$ ensures that the agents beliefs are an aggregated mixture of each agents $\Lambda_{i\theta}(t)$ $\forall i\in\mathcal{M}$. Next, we discuss the properties of the uncertain likelihood update $\ell_{i\theta}(\omega_{it})$ that enable our main result. 

\begin{lemma} \label{lem:ulru_finite}
The uncertain likelihood update has the following properties with probability 1:
\begin{enumerate}
    \item $\lim_{t\to\infty} \ell_{i\theta}(\omega_{it}) = 1$ when $R_{i\theta}<\infty$, and
    \item $\lim_{t\to\infty, R_{i\theta}\to\infty} \ell_{i\theta}(\omega|\boldsymbol{\Omega}_{i1:t-1}) = \frac{\mathcal{N}(\omega|\mu_{i\theta},\lambda_{i\theta})}{\mathcal{N}(\omega|\mu_{i\theta^*},\lambda_{i\theta^*})}$.
\end{enumerate}
\end{lemma}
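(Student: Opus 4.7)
The plan is to exploit the telescoping identity from (\ref{eq:tell}), namely $\ell_{i\theta}(\omega_{it}) = \Lambda_{i\theta}(t)/\Lambda_{i\theta}(t-1)$, together with the posterior-predictive interpretation of the update already discussed around (\ref{eq:UL}) and (\ref{eq:ULR}). This reduces the lemma to facts about the uncertain likelihood ratio and about the concentration of the Gaussian-gamma posterior, both of which are essentially in hand.

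For Part (1), the argument would be a one-line corollary of Lemma~\ref{lem:ULR}. With $R_{i\theta}<\infty$, Lemma~\ref{lem:ULR} gives $\Lambda_{i\theta}(t)\to\widetilde{\Lambda}_{i\theta}$ almost surely, where $\widetilde{\Lambda}_{i\theta}=\mathcal{N}(\mathbf{r}_{i\theta}|\mu_{i\theta^*},\lambda_{i\theta^*}^{-1})/\hat{P}(\mathbf{r}_{i\theta})$ is strictly positive and finite almost surely because both numerator and denominator are strictly positive for any finite realization of the prior evidence. Writing $\ell_{i\theta}(\omega_{it})=\Lambda_{i\theta}(t)/\Lambda_{i\theta}(t-1)$ then yields $\ell_{i\theta}(\omega_{it})\to \widetilde{\Lambda}_{i\theta}/\widetilde{\Lambda}_{i\theta}=1$.

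For Part (2), I would first rewrite the update as a ratio of posterior predictives,
$$
\ell_{i\theta}(\omega|\boldsymbol{\Omega}_{i1:t-1}) \;=\; \frac{\hat{P}(\omega|\mathbf{r}_{i\theta},\boldsymbol{\Omega}_{i1:t-1})}{\hat{P}(\omega|\boldsymbol{\Omega}_{i1:t-1})} \;=\; \frac{\mathbb{E}_{f(\mu,\lambda|\mathbf{r}_{i\theta},\boldsymbol{\Omega}_{i1:t-1})}[\mathcal{N}(\omega|\mu,\lambda^{-1})]}{\mathbb{E}_{f(\mu,\lambda|\boldsymbol{\Omega}_{i1:t-1})}[\mathcal{N}(\omega|\mu,\lambda^{-1})]},
$$
which follows from (\ref{eq:ell}) by matching the Gaussian-gamma normalizing constants. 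Then I would apply the same Dirac-delta collapse argument used in the proof of Lemma~\ref{lem:ULR} and Corollary~\ref{cor:ULR_Dog}: because $\mathbf{r}_{i\theta}$ is sampled i.i.d.\ from $\mathcal{N}(\cdot|\mu_{i\theta},\lambda_{i\theta}^{-1})$, the strong law of large numbers forces the posterior mean and precision of $\mathcal{N}\mathcal{G}(\mu,\lambda|\mathbf{r}_{i\theta},\boldsymbol{\Omega}_{i1:t-1})$ to $(\mu_{i\theta},\lambda_{i\theta})$ and its variances to $0$ as $R_{i\theta}\to\infty$. Dominated convergence against the bounded, continuous integrand $\mathcal{N}(\omega|\mu,\lambda^{-1})$ then yields the numerator $\mathcal{N}(\omega|\mu_{i\theta},\lambda_{i\theta}^{-1})$. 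The denominator is handled by an identical argument as $t\to\infty$ using $\boldsymbol{\Omega}_{i1:t-1}\sim P_{i\theta^*}$, giving $\mathcal{N}(\omega|\mu_{i\theta^*},\lambda_{i\theta^*}^{-1})$.

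The main obstacle is the joint limit in Part (2): when $\theta\neq\theta^*$, the posterior in the numerator is conditioned on two data sets drawn from \emph{different} Gaussians, and the limiting value of the posterior mean is a convex combination of $\mu_{i\theta}$ and $\mu_{i\theta^*}$ whose weights depend on the relative rates of $R_{i\theta}$ and $t$. To make the stated limit unambiguous, I would interpret $\lim_{t\to\infty, R_{i\theta}\to\infty}$ as the iterated limit in which $R_{i\theta}\to\infty$ is taken first, matching the usage in Corollary~\ref{cor:ULR_Dog}; with this order, the count $\kappa_{R_{i\theta}}$ dominates $t$ in the weighted update of $\mu_{R_{i\theta}+t}$ and $\beta_{R_{i\theta}+t}$, so the numerator's posterior collapses at $(\mu_{i\theta},\lambda_{i\theta})$ for every fixed $t$, after which the $t\to\infty$ limit acts only on the denominator. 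Verifying this order-of-limits interpretation carefully, and checking that the Gaussian-gamma variance terms from (\ref{eq:ell_params}) indeed vanish along this schedule almost surely, is the only genuinely nontrivial step.
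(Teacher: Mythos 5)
Your proposal is correct, and it splits into two cases relative to the paper: Part (2) is essentially the paper's own argument, while Part (1) takes a genuinely different and more economical route. The paper proves \emph{both} parts by the same mechanism: it rewrites $\ell_{i\theta}$ as a ratio of posterior predictive distributions (your displayed identity is exactly the paper's \eqref{eq:ell_gen}) and then collapses the Gaussian-gamma posteriors to Dirac deltas --- at $(\mu_{i\theta^*},\lambda_{i\theta^*})$ in both numerator and denominator when $R_{i\theta}<\infty$, giving the limit $1$, and at $(\mu_{i\theta},\lambda_{i\theta})$ in the numerator when $R_{i\theta}\to\infty$ is taken first, giving the ratio of Gaussians. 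So your Part (2) matches the paper, including the order of limits: the paper also sends $R_{i\theta}\to\infty$ before $t\to\infty$, although it never flags the ambiguity you raise about the numerator's posterior being fed data from two different Gaussians; making that interpretation explicit is a genuine improvement in rigor over the paper's wording. Your Part (1), by contrast, bypasses the posterior-collapse computation entirely: since \eqref{eq:tell} gives $\ell_{i\theta}(\omega_{it})=\Lambda_{i\theta}(t)/\Lambda_{i\theta}(t-1)$ and Lemma~\ref{lem:ULR} gives $\Lambda_{i\theta}(t)\to\widetilde{\Lambda}_{i\theta}$ almost surely with $0<\widetilde{\Lambda}_{i\theta}<\infty$ for any finite realization of $\mathbf{r}_{i\theta}$, the ratio tends to $1$. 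This buys two things: it reuses a lemma already proved, and it yields convergence directly at the random observation $\omega_{it}$, whereas the paper's pointwise-in-$\omega$ delta-collapse argument strictly needs a local-uniformity step to transfer the limit to the random evaluation point. One caveat on your Part (2): the appeal to dominated convergence against a ``bounded, continuous integrand'' is not literally right, since $\mathcal{N}(\omega|\mu,\lambda^{-1})$ is unbounded in $(\mu,\lambda)$ (take $\mu=\omega$ and $\lambda\to\infty$), so one needs a tail estimate on the gamma marginal of the posterior or a uniform-integrability argument; however, the paper's own proof asserts the Dirac collapse and plugs in at the same level of informality, so this is not a gap relative to the paper's standard of rigor.
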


\begin{proof}
We first prove condition $1$. Generally, the uncertain likelihood update (\ref{eq:ell}) can be written as follows. 
\begin{align} \label{eq:ell_gen}
\ell_{i\theta}(\omega_{it}) &=  \frac{\int_{0}^\infty \int_{\mathbb{R}} \frac{\mathcal{N}(\omega_{it}|\mu,\lambda)\mathcal{N}(\boldsymbol{\Omega}_{i1:t-1}|\mu,\lambda)\mathcal{N}\mathcal{G}(\mu,\lambda|\boldsymbol{\phi}_{R_{i\theta}})d\mu d\lambda}{\int_{0}^\infty \int_{\mathbb{R}} \mathcal{N}(\boldsymbol{\Omega}_{i1:t-1}|\mu,\lambda)\mathcal{N}\mathcal{G}(\mu,\lambda|\boldsymbol{\phi}_{R_{i\theta}})d\mu d\lambda}} {\int_{0}^\infty \int_{\mathbb{R}} \frac{\mathcal{N}(\omega_{it}|\mu,\lambda)\mathcal{N}(\boldsymbol{\Omega}_{i1:t-1}|\mu,\lambda)\mathcal{N}\mathcal{G}(\mu,\lambda|\boldsymbol{\phi}_0)d\mu d\lambda}{\int_{0}^\infty \int_{\mathbb{R}} \mathcal{N}(\boldsymbol{\Omega}_{i1:t-1}|\mu,\lambda)\mathcal{N}\mathcal{G}(\mu,\lambda|\boldsymbol{\phi}_0)d\mu d\lambda}} \nonumber \\
&= \frac{\int_{0}^\infty \int_{\mathbb{R}} \mathcal{N}(\omega_{it}|\mu,\lambda) \mathcal{N}\mathcal{G}(\mu,\lambda| \boldsymbol{\phi}_{R_{i\theta} + t-1})d\mu d\lambda} {\int_{0}^\infty \int_{\mathbb{R}} \mathcal{N}(\omega_{it}|\mu,\lambda) \mathcal{N}\mathcal{G}(\mu,\lambda| \boldsymbol{\phi}_{t-1})d\mu d\lambda},\nonumber \\
\end{align}
where the first line is achieved due to i.i.d. measurements, while the second line is an application of Bayes' rule. As illustrated in the proof of Lemma~\ref{lem:ULR}, as the measurement sequence grows unboundedly,  $\lim_{t\to\infty} \mathcal{N}\mathcal{G}(\mu,\lambda|\boldsymbol{\phi}_{t-1})=\delta(\mu-\mu_{i\theta^*},\lambda-\lambda_{i\theta^*})$ and $\lim_{t\to\infty} \mathcal{N}\mathcal{G}(\mu,\lambda|\boldsymbol{\phi}_{R_{i\theta} + t-1})=\delta(\mu-\mu_{i\theta^*},\lambda-\lambda_{i\theta^*})$ with probability 1 since $R_{i\theta}<\infty$. Thus, 
\begin{eqnarray}
\lim_{t\to\infty}\ell_{i\theta}(\omega|\boldsymbol{\Omega}_{i1:t-1}) = \frac{\mathcal{N}(\omega|\mu_{i\theta^*},\lambda_{i\theta^*})}{\mathcal{N}(\omega|\mu_{i\theta^*},\lambda_{i\theta^*})} =  1.  \nonumber
\end{eqnarray}

Next, we prove condition $2$ when the amount of prior evidence grows unboundedly. Following the same logic as above, $\lim_{R_{i\theta}\to\infty} \mathcal{N}\mathcal{G}(\mu,\lambda|\mathbf{r}_{i\theta}) = \delta(\mu-\mu_{i\theta}, \lambda-\lambda_{i\theta})$ with probability $1$. Then, the $\ell_{i\theta}(\omega)$ simplifies to
\begin{align}\label{eq:ell_rinfity}
\lim_{R_{i\theta}\to\infty} \ell_{i\theta}(\omega_{it}) = \frac{\mathcal{N}(\omega_{it}|\mu_{i\theta},\lambda_{i\theta})}{\int_{0}^\infty \int_{\mathbb{R}} \mathcal{N}(\omega_{it}|\mu,\lambda) \mathcal{N}\mathcal{G}(\mu,\lambda| \boldsymbol{\phi}_{t-1})d\mu d\lambda}. 
\end{align}
Thus, as the number of private signals grows unboundedly, the uncertain likelihood update converges to 
\begin{align} \label{eq:ell_trinfty}
\lim_{t\to\infty,R_{i\theta}\to\infty} \ell_{i\theta}(\omega|\boldsymbol{\Omega}_{it-1}) = \frac{\mathcal{N}(\omega|\mu_{i\theta},\lambda_{i\theta})}{\mathcal{N}(\omega|\mu_{i\theta^*},\lambda_{i\theta^*})}, 
\end{align}
with probability $1$ for any $\omega\in\mathbb{R}$. 
\end{proof}

\begin{corollary} \label{cor:ell_matching}
When $R_{i\theta}\to\infty$ and $\mathbf{r}_{i\theta}$ is drawn from the ground truth distribution, i.e., $\mu_{i\theta}=\mu_{i\theta^*}$ and $\lambda_{i\theta}=\lambda_{i\theta^*}$, then the uncertain likelihood update converges to $1$ with probability $1$ as $t\to\infty$. 
\end{corollary}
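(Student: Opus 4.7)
The plan is to invoke Lemma~\ref{lem:ulru_finite}(2) directly and then substitute the hypothesized parameter equalities. That lemma already establishes, under the double limit $t\to\infty$ and $R_{i\theta}\to\infty$, the almost-sure pointwise identity
\begin{equation*}
\lim_{t\to\infty,\, R_{i\theta}\to\infty} \ell_{i\theta}(\omega\,|\,\boldsymbol{\Omega}_{i1:t-1}) \;=\; \frac{\mathcal{N}(\omega|\mu_{i\theta},\lambda_{i\theta}^{-1})}{\mathcal{N}(\omega|\mu_{i\theta^*},\lambda_{i\theta^*}^{-1})}
\end{equation*}
for every $\omega\in\mathbb{R}$. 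Under the present hypothesis, the training data $\mathbf{r}_{i\theta}$ is i.i.d.\ from the ground-truth distribution, so by construction $\mu_{i\theta}=\mu_{i\theta^*}$ and $\lambda_{i\theta}=\lambda_{i\theta^*}$. Substituting these equalities into the ratio above makes numerator and denominator coincide, and the limit is $1$.

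The only point worth checking carefully is that the derivation of Lemma~\ref{lem:ulru_finite}(2) does indeed apply in this regime. Inspecting that proof, the argument rests on the fact that as $R_{i\theta}\to\infty$ the Gaussian-gamma posterior $\mathcal{NG}(\mu,\lambda|\boldsymbol{\phi}_{R_{i\theta}})$ collapses almost surely to a Dirac delta at the true generating parameters of $\mathbf{r}_{i\theta}$; that collapse is driven by the strong law of large numbers applied to $\bar{r}_{i\theta}$ and $s_{i\theta}^2$, which converge to the sampling mean and variance of $\mathbf{r}_{i\theta}$ regardless of whether those equal $(\mu_{i\theta^*},\lambda_{i\theta^*}^{-1})$ or some other nominal $(\mu_{i\theta},\lambda_{i\theta}^{-1})$. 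In the matched case the two coincide, so the delta sits at the ground truth and the ratio in (\ref{eq:ell_rinfity}) reduces to $\mathcal{N}(\omega_{it}|\mu_{i\theta^*},\lambda_{i\theta^*}^{-1})$ divided by the posterior predictive evaluated at the same time, which tends to the identical Gaussian by the same argument used for the denominator in Lemma~\ref{lem:ulru_finite}(1).

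There is no real obstacle here: the corollary is the specialization of Lemma~\ref{lem:ulru_finite}(2) to the case where the uncertain model's limiting parameters coincide with the truth, so no new estimates are needed and the one-line substitution delivers the claim with probability $1$.
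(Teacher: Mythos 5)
Your proof is correct and matches the paper's intended argument: the corollary is stated without a separate proof precisely because it is the specialization of Lemma~\ref{lem:ulru_finite}(2) to the matched case $\mu_{i\theta}=\mu_{i\theta^*}$, $\lambda_{i\theta}=\lambda_{i\theta^*}$, where the limiting ratio of Gaussians is identically $1$. Your extra check that the Dirac-delta collapse of $\mathcal{NG}(\mu,\lambda|\boldsymbol{\phi}_{R_{i\theta}})$ lands at the ground-truth parameters in this regime is a sound (and welcome) verification of the same mechanism the paper relies on.
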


These properties are critical in proving that the beliefs of every agent converge (or diverge). When the agents have a finite amount of prior evidence, the combined beliefs are updated using $\ell_{i\theta}(\omega_{it})=1$, which turns the social learning rule (\ref{eq:LL_rule}) into a consensus geometric average. Whereas, when the agents prior evidence grows unboundedly, we can express the beliefs as a function of the expected value of the log-uncertain likelihood update captured in the following lemma.  

\begin{lemma} \label{cor:exp_log_ell}
The expected value of the log-uncertain likelihood update when the agent $i$'s amount of prior evidence grows unboundedly has the following properties,
\begin{align}\label{eq:exp_bound_ell}
\mathbb{E}[\log(\ell_{i\theta}(\omega_{it}))] = D_{KL}\left(\mathcal{N}(\cdot|\mu_{i\theta^*},\lambda_{i\theta^*}^{-1})\|\hat{P}(\cdot|\boldsymbol{\Omega}_{i1:t-1})\right)- \nonumber \\ D_{KL}\left(\mathcal{N}(\cdot|\mu_{i\theta^*},\lambda_{i\theta^*}^{-1})\|\mathcal{N}(\cdot|\mu_{i\theta},\lambda_{i\theta}^{-1}))\right), 
\end{align}
where
\begin{eqnarray}
\hat{P}(\cdot|\boldsymbol{\Omega}_{i1:t-1}) =\int_0^\infty \int_{\mathbb{R}} \mathcal{N}(\omega|\mu,\lambda)\mathcal{N}\mathcal{G}(\mu,\lambda|\boldsymbol{\phi}_0)d\mu d\lambda
\end{eqnarray}
is a student-t distribution \cite{M2007} and 
{\small
\begin{align}
\lim_{t\to\infty} \mathbb{E}[\log(\ell_{i\theta}(\omega_{it}))] = -D_{KL}(\mathcal{N}(\cdot|\mu_{i\theta^*},\lambda_{i\theta^*}^{-1})\|\mathcal{N}(\cdot|\mu_{i\theta},\lambda_{i\theta}^{-1}))). 
\end{align}}
\end{lemma}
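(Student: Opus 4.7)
The plan is to reduce the statement to the closed form for $\ell_{i\theta}(\omega_{it})$ already obtained in Lemma~\ref{lem:ulru_finite}, followed by a single algebraic rearrangement of the expectation, and then to invoke the posterior-concentration argument from the proof of Lemma~\ref{lem:ULR} to pass to the asymptotic limit.

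First, under $R_{i\theta}\to\infty$, equation~(\ref{eq:ell_rinfity}) gives
\[
\ell_{i\theta}(\omega_{it})=\frac{\mathcal{N}(\omega_{it}\mid\mu_{i\theta},\lambda_{i\theta}^{-1})}{\hat{P}(\omega_{it}\mid\boldsymbol{\Omega}_{i1:t-1})},
\]
where the denominator is the Student-$t$ posterior predictive obtained by marginalizing $\mathcal{N}(\omega\mid\mu,\lambda)$ against the updated Gaussian-gamma $\mathcal{N}\mathcal{G}(\mu,\lambda\mid\boldsymbol{\phi}_{t-1})$. Taking the logarithm and the conditional expectation with respect to $\omega_{it}\sim\mathcal{N}(\cdot\mid\mu_{i\theta^*},\lambda_{i\theta^*}^{-1})$, multiplying numerator and denominator inside the logarithm by the true density $\mathcal{N}(\omega\mid\mu_{i\theta^*},\lambda_{i\theta^*}^{-1})$, and splitting the single integral into two, produces precisely the two KL-divergence terms that appear in~(\ref{eq:exp_bound_ell}) by the definition stated in the preamble. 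This handles the identity part of the lemma.

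For the asymptotic statement, the aim is to show that $D_{KL}\bigl(\mathcal{N}(\cdot\mid\mu_{i\theta^*},\lambda_{i\theta^*}^{-1})\,\|\,\hat{P}(\cdot\mid\boldsymbol{\Omega}_{i1:t-1})\bigr)\to 0$ almost surely as $t\to\infty$. The proof of Lemma~\ref{lem:ULR} has already shown that $\mathcal{N}\mathcal{G}(\mu,\lambda\mid\boldsymbol{\phi}_{t-1})$ collapses almost surely to the Dirac mass at $(\mu_{i\theta^*},\lambda_{i\theta^*})$, by the strong law of large numbers applied to the sufficient statistics $\bar{\omega}_{it}$ and $s_{it}$. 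Marginalizing against $\mathcal{N}(\omega\mid\mu,\lambda)$ then drives the predictive $\hat{P}(\omega\mid\boldsymbol{\Omega}_{i1:t-1})$ to $\mathcal{N}(\omega\mid\mu_{i\theta^*},\lambda_{i\theta^*}^{-1})$. Substituting this into the identity obtained in the previous paragraph leaves only $-D_{KL}\bigl(\mathcal{N}(\cdot\mid\mu_{i\theta^*},\lambda_{i\theta^*}^{-1})\,\|\,\mathcal{N}(\cdot\mid\mu_{i\theta},\lambda_{i\theta}^{-1})\bigr)$, as claimed.

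The main obstacle is upgrading pointwise convergence of the predictive density to convergence of the KL divergence itself, which in general does not follow from density convergence alone due to tail effects. I would exploit the explicit Student-$t$ structure: the degrees of freedom $2\alpha_{t-1}=2\alpha_{0}+t-1$ diverge, while the location and precision parameters converge almost surely to $\mu_{i\theta^*}$ and $\lambda_{i\theta^*}$. A Student-$t$ with location/scale converging to a Gaussian's and degrees of freedom growing unboundedly converges to that Gaussian in a sufficiently strong sense (Scheffé, total variation, and in all moments once $\alpha_{t-1}>1$). The KL integral can then be disposed of either by a direct closed-form computation taken to the limit, or by a dominated-convergence argument that uses the Gaussian tails of $\mathcal{N}(\cdot\mid\mu_{i\theta^*},\lambda_{i\theta^*}^{-1})$ to control the at-most-polynomial growth of $|\log\hat{P}(\omega\mid\boldsymbol{\Omega}_{i1:t-1})|$ uniformly in $t$ for $t$ large enough.
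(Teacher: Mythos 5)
Your proposal follows essentially the same route as the paper's proof: the same closed form for $\ell_{i\theta}(\omega_{it})$ under $R_{i\theta}\to\infty$ from Lemma~\ref{lem:ulru_finite}, the same multiply-and-divide (equivalently, add-and-subtract) manipulation inside the expectation to produce the two KL-divergence terms, and the same posterior-concentration argument (SLLN collapsing $\mathcal{N}\mathcal{G}(\mu,\lambda|\boldsymbol{\phi}_{t-1})$ to a Dirac mass) to make the first KL term vanish as $t\to\infty$. The one point where you go beyond the paper is in explicitly justifying the exchange of limit and integral---the paper simply asserts $\lim_{t\to\infty} D_{KL}(\mathcal{N}(\cdot|\mu_{i\theta^*},\lambda_{i\theta^*}^{-1})\|\hat{P}(\cdot|\boldsymbol{\Omega}_{i1:t-1}))=0$ from pointwise density convergence, whereas your dominated-convergence/Student-$t$ tail argument closes that gap; this is a rigorization of the same step, not a different approach.
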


\begin{proof}
First, the proof of Lemma~\ref{lem:ulru_finite} showed that $R_{i\theta}\to\infty$, the uncertain likelihood update is 
\begin{eqnarray}
\ell_{i\theta}(\omega_{it}) = \frac{\mathcal{N}(\omega_{it}|\mu_{i\theta},\lambda_{i\theta}^{-1})}{\hat{P}(\omega_{it}|\boldsymbol{\Omega}_{i1:t-1})}, \nonumber
\end{eqnarray}
with probability 1. Then, the expected value of the log-uncertain likelihood update is
\begin{align}
 &\mathbb{E}[\log(\ell_{i\theta}(\omega_{it}))] = \nonumber \\ &\int_{\mathbb{R}} \mathcal{N}(\omega|\mu_{i\theta^*},\lambda_{i\theta}^{-1})\log\left(\frac{\mathcal{N}(\omega|\mu_{i\theta},\lambda_{i\theta}^{-1})}{\hat{P}(\omega|\boldsymbol{\Omega}_{i1:t-1})}\right)d\omega. 
\end{align}
After adding and subtracting $\mathcal{N}(\omega|\mu_{i\theta^*},\lambda_{i\theta}^{-1})\log(\mathcal{N}(\omega|\mu_{i\theta^*},\lambda_{i\theta}^{-1}))$ inside the integral, we achieve
\begin{multline}
\mathbb{E}[\log(\ell_{i\theta}(\omega))] =  D_{KL}(\mathcal{N}(\cdot|\mu_{j\theta^*},\lambda_{j\theta^*}^{-1})\|\hat{P}(\cdot|\boldsymbol{\Omega}_{i1:t-1}))- \nonumber \\
 D_{KL}(\mathcal{N}(\cdot|\mu_{j\theta^*},\lambda_{j\theta^*}^{-1})\|\mathcal{N}(\cdot|\mu_{j\theta},\lambda_{j\theta}^{-1}))).
\end{multline}
Then, as $t\to\infty$, $\hat{P}(\cdot|\boldsymbol{\Omega}_{i1:t-1})$ converges to a Guassian distribution $\mathcal{N}(\cdot|\mu_{i\theta^*},\lambda_{i\theta^*}^{-1})$ with probability $1$ due to the strong law of large numbers. Thus, our desired result is achieved since
\begin{eqnarray}
\lim_{t\to\infty} D_{KL}(\mathcal{N}(\cdot|\mu_{j\theta^*},\lambda_{j\theta^*}^{-1})\|\hat{P}(\cdot|\boldsymbol{\Omega}_{i1:t-1})) = 0.
\end{eqnarray}
\end{proof}

Lemma~\ref{cor:exp_log_ell} indicates that as time $t$ becomes very large, $\ell_{i\theta}(\omega)$ behaves as $\exp(-D_{KL}(\mathcal{N}(\cdot|\mu_{i\theta^*},\lambda_{i\theta^*}^{-1})\|\mathcal{N}(\cdot|\mu_{i\theta},\lambda_{i\theta}^{-1})) + \epsilon)$ for some $\epsilon>0$, where $\epsilon\to0$ as $t\to\infty$. This means that if $D_{KL}(\mathcal{N}(\cdot|\mu_{i\theta^*},\lambda_{i\theta^*}^{-1})\|\mathcal{N}(\cdot|\mu_{i\theta},\lambda_{i\theta}^{-1})) >\epsilon$, then the expected beliefs will decrease exponentially based on the KL divergence. This result is necessary to prove Theorem~\ref{cor:dogmatic}. 

Finally, we provide the final property of the uncertain likelihood update that is necessary to prove our main result. 
\begin{lemma} \label{lem:ell_bounded}
The uncertain likelihood update is finite and lower bounded by a positive value with probability 1, i.e., $\ell_{i\theta}(\omega_{it})>0$ and finite $\forall t$ with probability 1, for any $t\geq 0$, and any realization $\omega_{it}$ and $i\in \mathcal{M}$. 
\end{lemma}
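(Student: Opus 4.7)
The plan is to work directly from the integral representation of the uncertain likelihood update given in the first line of equation~(\ref{eq:ell_gen}), which expresses $\ell_{i\theta}(\omega_{it})$ as the ratio $\hat P(\omega_{it}\mid\boldsymbol{\Omega}_{i1:t-1},\mathbf{r}_{i\theta})/\hat P(\omega_{it}\mid\boldsymbol{\Omega}_{i1:t-1})$. Both numerator and denominator are standard Gaussian–gamma posterior predictives, which by integrating out $(\mu,\lambda)$ are known to reduce to Student-$t$ densities in the new observation $\omega_{it}$ with parameters determined by $(\kappa,\alpha,\beta)$. Since a Student-$t$ density is strictly positive and finite at every point of $\mathbb{R}$ whenever its parameters are proper (positive and finite) and it is evaluated at a finite argument, the ratio will be strictly positive and finite as soon as (i) the updated Gaussian–gamma parameters in both numerator and denominator are proper, and (ii) the observation $\omega_{it}$ is a finite real number, which holds almost surely because $\omega_{it}\sim\mathcal{N}(\mu_{i\theta^*},\lambda_{i\theta^*}^{-1})$.

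The main content of the proof is therefore a direct verification, using the recursions in~(\ref{eq:ul_params}), (\ref{eq:ell_params}), and~(\ref{eq:ulr_params}) together with the chosen initialization $\mu_0=0$, $\kappa_0=\alpha_0=\beta_0=1$, that the posterior parameters remain proper. The count-type parameters are immediate: $\kappa_{R_{i\theta}+t-1} = 1 + R_{i\theta} + t - 1 \ge 1$, $\alpha_{R_{i\theta}+t-1} = 1 + (R_{i\theta}+t-1)/2 \ge 1$, and analogously for $(\kappa_{t-1},\alpha_{t-1})$, all finite. The delicate parameter is $\beta$, since it appears to positive powers in the denominator of~(\ref{eq:ell}). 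Here I would rewrite the sum-of-squares terms in $\beta_t$ and $\beta_{R_{i\theta}+t}$ as honest nonnegative quantities (e.g.\ $\tfrac{1}{2}\sum_{\tau}(\omega_{i\tau}-\bar\omega_{it})^2$ plus a squared-mean correction) to conclude that $\beta_t \ge \beta_0 = 1$ and $\beta_{R_{i\theta}+t} \ge \beta_0 = 1$; finiteness follows because the sample second moments of $\boldsymbol{\Omega}_{i1:t-1}$ and $\mathbf{r}_{i\theta}$ are finite for every realization with probability one.

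With proper parameters in hand, I would then read off from~(\ref{eq:ell}) that the closed-form expression is a finite product and ratio of values of the gamma function at arguments $\ge 1$ (hence positive and finite), powers $\beta^{\alpha}$ with $\beta\ge 1$ and $\alpha$ finite (hence positive and finite), and positive square roots of the $\kappa$'s. Thus $0 < \ell_{i\theta}(\omega_{it}) < \infty$ for every finite $\omega_{it}$ and every $t\ge 0$. Combined with $\Pr\{\omega_{it}\in\mathbb{R}\}=1$, this gives the claim.

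I do not expect a serious obstacle: the one step that requires care is the nonnegativity of the bias-corrected sum-of-squares contribution to $\beta$, which must be handled by writing $s_{it}-t\bar\omega_{it}^{\,2}=\sum_{\tau=1}^{t}(\omega_{i\tau}-\bar\omega_{it})^{2}\ge 0$ (and similarly for the prior-evidence piece inside $\beta_{R_{i\theta}}$) so that the baseline $\beta_0=1$ provides a uniform, deterministic positive lower bound. Everything else is bookkeeping in the recursions.
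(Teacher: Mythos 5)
Your approach is fundamentally the same as the paper's: both arguments rest on the observation that $\ell_{i\theta}(\omega_{it})$ is a ratio of posterior predictive (Student-$t$ type) densities, which are proper, continuous, and strictly positive on all of $\mathbb{R}$, so the ratio evaluated at a finite realization can be neither $0$ nor $\infty$. Where you go further is in actually verifying properness of the Gaussian--gamma parameters along the recursions, in particular the lower bound $\beta_t \ge \beta_0 = 1$ and $\beta_{R_{i\theta}+t}\ge \beta_{R_{i\theta}} \ge 1$ via the identity $s_{it}-t\bar\omega_{it}^{\,2}=\sum_{\tau=1}^{t}(\omega_{i\tau}-\bar\omega_{it})^{2}\ge 0$; the paper simply asserts the predictive distributions are ``proper,'' so your bookkeeping is a genuine improvement in rigor for the case it covers.

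However, there is one omission relative to the paper's proof, and it matters for how the lemma is used. The paper treats three regimes: (i) finite $t$ and finite $R_{i\theta}$, (ii) $R_{i\theta}\to\infty$ with $t$ finite, where the update becomes \eqref{eq:ell_rinfity} (a Gaussian over a Student-$t$ predictive), and (iii) $t\to\infty$ and $R_{i\theta}\to\infty$, where it becomes a ratio of Gaussians \eqref{eq:ell_trinfty}. Your parameter-properness argument implicitly assumes $R_{i\theta}<\infty$ (otherwise $\alpha_{R_{i\theta}+t}$ and $\beta_{R_{i\theta}+t}$ are not finite and the closed form \eqref{eq:ell} involves gamma functions at unbounded arguments), so as written it only establishes regime (i). This is not a harmless restriction: Lemma~\ref{lem:ell_bounded} is invoked in the proof sketch of Theorem~\ref{cor:dogmatic} precisely when every agent's prior evidence grows unboundedly. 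The repair is short --- in the limits (ii) and (iii) the numerator (respectively both numerator and denominator) converges with probability $1$ to a Gaussian density, which is again strictly positive and finite at any finite argument, so your ratio argument goes through unchanged --- but it needs to be said.
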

\begin{proof}
First, for a finite $t$ and $R_{i\theta}$, we note that $\ell_{i\theta}(\omega_{it})$ \eqref{eq:ell_gen} is a ratio of posterior predictive distribution, which are continuous functions, strictly positive $\forall \omega_{it}\in\mathbb{R}$, and proper. Then, when $R_{i\theta}\to\infty$ and $t$ is finite, $\ell_{i\theta}(\omega_{it})$ becomes \eqref{eq:ell_rinfity}, which has the same properties since the numerator is a Gaussian distribution. Furthermore, in the limiting condition when both $t\to\infty$ and $R_{i\theta}\to\infty$, $\ell_{i\theta}(\omega)$ becomes a ratio of Gaussian distributions with the same properties \eqref{eq:ell_trinfty}. Thus, in all three scenarios, $\ell_{i\theta}(\omega)$ can never be $0$ or $\infty$ since the distributions are proper and strictly positive. 
\end{proof}

\section{Outline of the proofs of Theorems~\ref{th:LL} and~\ref{cor:dogmatic}} \label{sec:proofs}
In this section, we will outline how to prove the main results. However, we will not explicitly show the details of the proofs due to space requirements. 

\subsection{Sketch of Theorem~\ref{th:LL} Proof}
To prove convergence, we must show a $t\to\infty$ $\|\log(\boldsymbol{\mu}_{t}(\theta))-((\sum_{j=1}^m \log(\widetilde{\Lambda}_{j\theta}))/m)\mathbf{1}\mathbf{1}'\|\to 0$ with probability $1$, where $\boldsymbol{\mu}_{t}(\theta)$ is a vector of the agents beliefs and $\mathbf{1}$ is a vector of all ones. Noting that $\log(\boldsymbol{\mu}_{t}(\theta))=\sum_{\tau=0}^t \mathbf{A}^{t-\tau}\log(\boldsymbol{\ell}_{\theta}(\omega_\tau))$ and using \eqref{eq:tell}, we can bound this absolute difference as $\sum_{\tau=0}^t \|\mathbf{A}^{t-\tau}-(\mathbf{1}\mathbf{1}')/m\| \|\log(\boldsymbol{\ell}_\theta(\boldsymbol{\omega_\tau}))\|$, where $\boldsymbol{\ell}_{i\theta}(\boldsymbol{\omega}_{\tau})$ is a vector of the individual uncertain likelihood updates. Noting that as $t\to\infty$, $\log(\ell_{i\theta}(\omega_t)) \to 0$ and  $\|\mathbf{A}^{t}-(\mathbf{1}\mathbf{1}')/m\|<\sqrt{2}m\lambda^t$, where $\lambda<1$ is the second largest eigenvalue of the adjacency matrix, we can directly use Lemma 3.1 in \cite{RNV2010} to achieve our desired result. Thus, the beliefs converge to the centralized solution. 

\subsection{Sketch of Theorem~\ref{cor:dogmatic} Proof}
Starting with the condition $\theta_i = \theta_i^*$ for all $i\in\mathcal{M}$, we first show that the log-beliefs diverge to infinity following the same logic as in the sketch of Theorem~\ref{th:LL} proof above. Using the fact that the uncertain likelihood ratio diverges to $\Lambda_{i\theta}\to\infty$ according to Corollary~\ref{cor:ULR_Dog}; the uncertain likelihood update converges to $\ell_{i\theta}(\omega)=1$ according to Corollary~\ref{cor:ell_matching} and is finite according to Lemma~\ref{lem:ell_bounded}, we can follow the same process as above to achieve the desired result. 

For the condition $\theta_i\ne\theta_i^*$ for at least one agent $i$, we first expand the log-belief equation $\log(\boldsymbol{\mu}_t(\theta))=\sum_{\tau = 0}^t \mathbf{A}^{t-\tau}\log(\boldsymbol{\ell}_{i\theta}(\boldsymbol{\omega}_t))$ into a sum of three terms, $\sum_{\tau = 0}^{T_1}\mathbf{A}^{t-\tau}\log(\boldsymbol{\ell}_{i\theta}(\boldsymbol{\omega}_{\tau}))$, $\sum_{\tau = T_1+1}^{t-T_2}\mathbf{A}^{t-\tau}\log(\boldsymbol{\ell}_{i\theta}(\boldsymbol{\omega}_{\tau}))$, and $\sum_{\tau = t-T_2}^{t}\mathbf{A}^{t-\tau}\log(\boldsymbol{\ell}_{i\theta}(\boldsymbol{\omega}_{\tau}))$. We know that because $\log(\ell_{i\theta}(\omega_{it}))$ is finite according to Lemma~\ref{lem:ell_bounded}, the first and third terms are finite. Then, we can pick $T_1$ and $T_2$ large enough such that $|\log(\ell_{i\theta}(\omega_{iT_1}))-\mathbb{E}[\log(\ell_{i\theta}(\omega))]|< \epsilon$ and $\|\mathbf{A}^{t-\tau}-(\mathbf{1}\mathbf{1}')/m\|<\epsilon$ for some $\epsilon>0$. Then, using the law of large numbers, we upper bound the second term by $(t-T_1-T_2)(\frac{1}{m}\sum_{i=1}^{m}\mathbb{E}[\log(\ell_{i\theta}(\omega))]+\epsilon B)$ where $B>0$ is finite and $\mathbb{E}[\log(\ell_{i\theta}(\omega))]$ is the negative KL divergence between $\theta$ and $\theta^*$.  Since $\epsilon$ can be made arbitrarily small by picking larger $T$'s, this upper bound goes to $-\infty$ as $t \rightarrow \infty$. Then, since the exponential function is continuous, the beliefs converge to $0$.

\section{Simulation Study} \label{sec:sim}
\begin{figure}[t]
	\subfigure[$\theta_1= \theta^*$]{
		\centering
		\includegraphics[width=0.5\columnwidth]{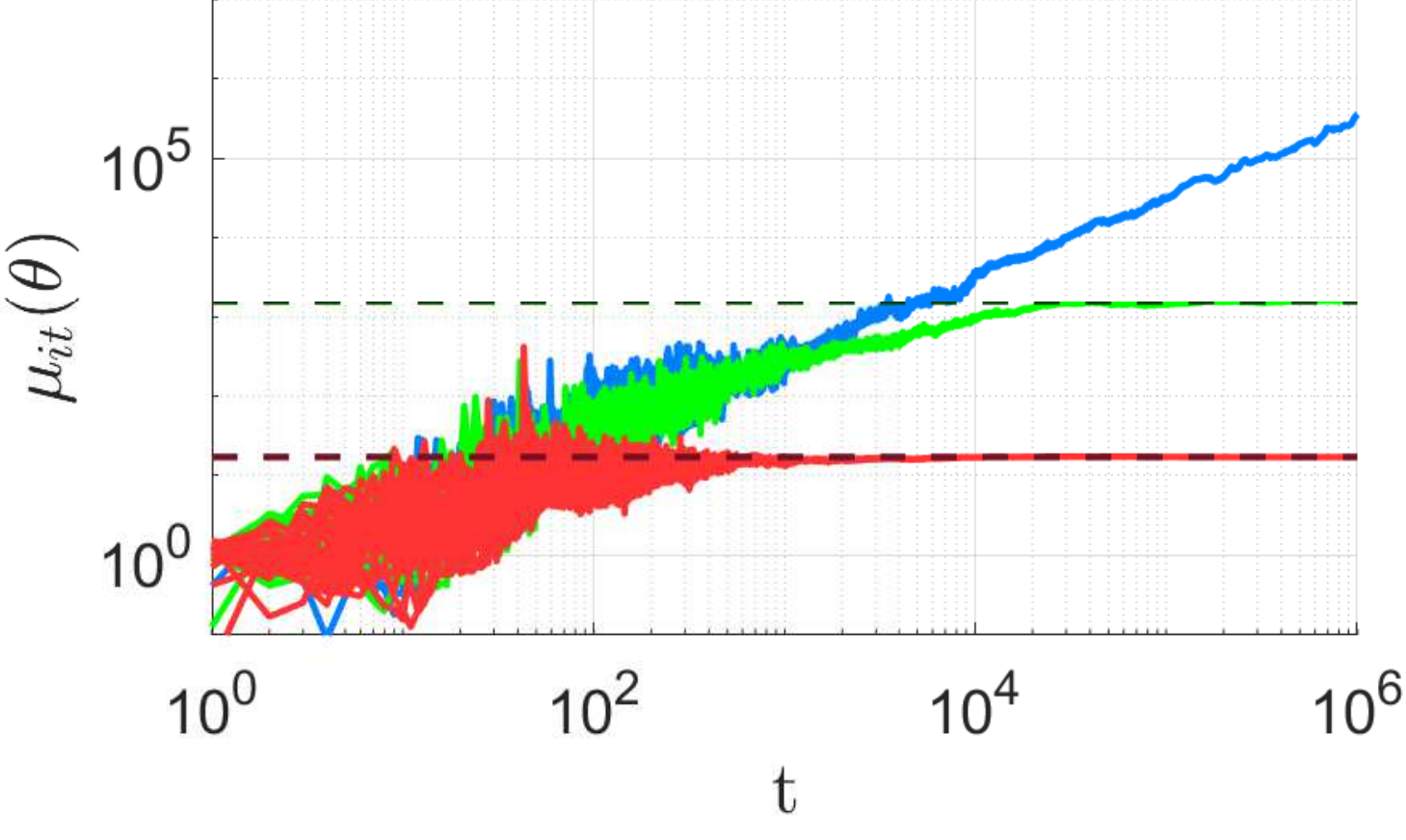}
		\label{fig:t1}
	}%
	\subfigure[$\theta_2\ne \theta^*$]{
		\centering
		\includegraphics[width=0.5\columnwidth]{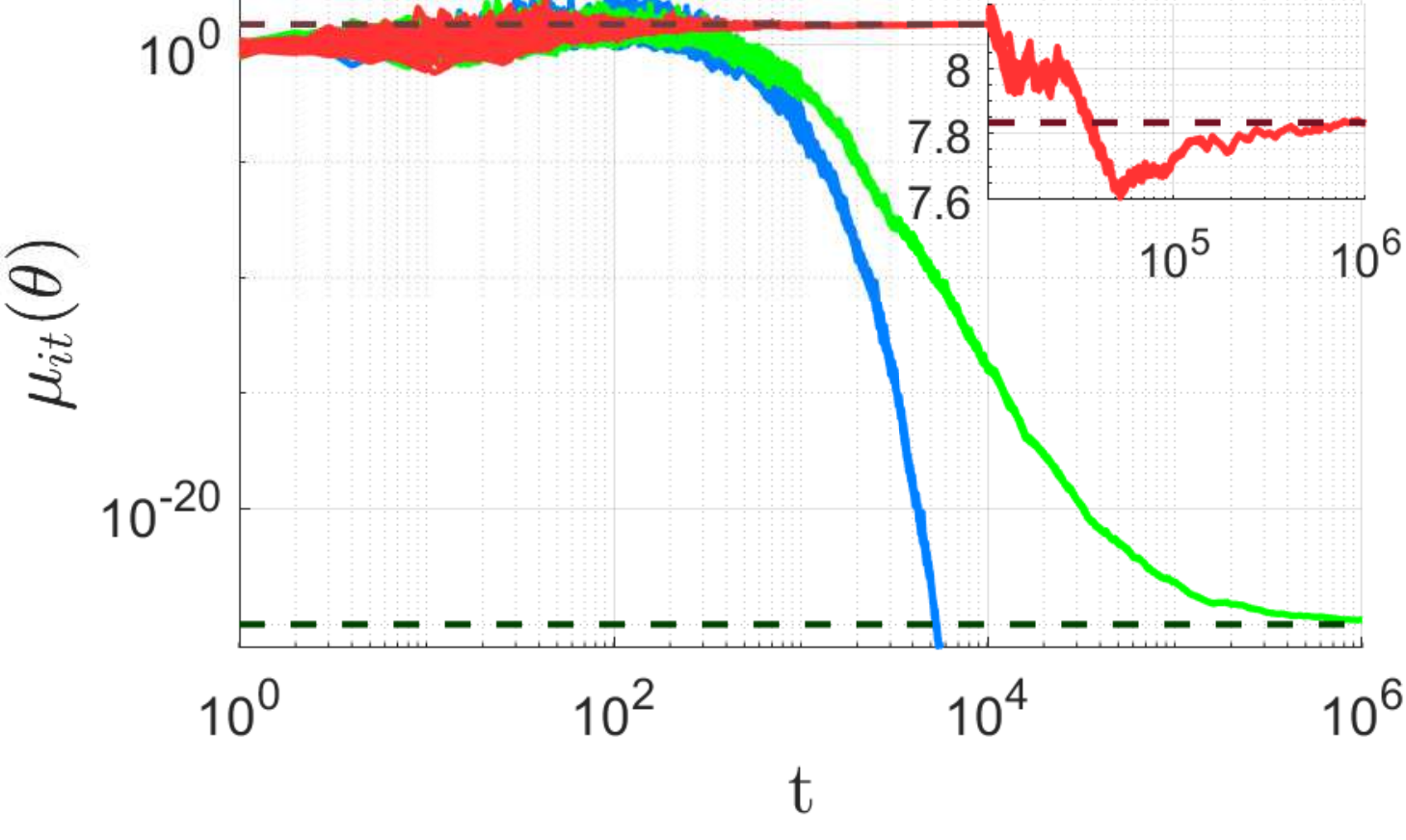}
		\label{fig:t2}
	}%
	
	\vspace{-6pt}
	\centering
	\subfigure{
	    \centering
		\includegraphics[width=.95\columnwidth]{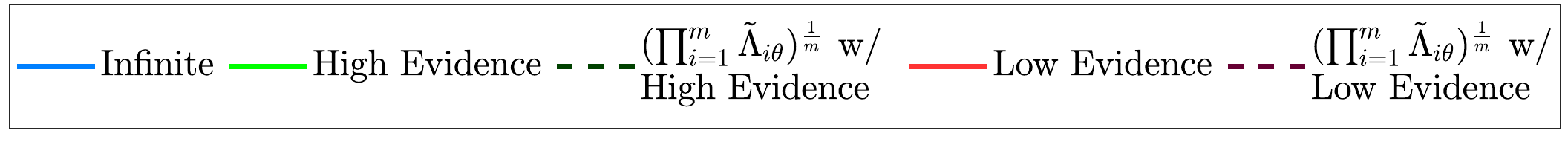}
	} \vspace{-12pt}
	\caption{Evolution of beliefs updated using (\ref{eq:LL_rule}) with $30$ agents connected in a directed cycle graph with self-loops. }\label{fig:mu_graphs} \vspace{-15pt}
\end{figure}

In this section, we empirically validate the convergence properties presented in Theorems~\ref{th:LL} and~\ref{cor:dogmatic}. We simulate a network of $|\mathcal{M}|\in[10,20,30]$ agents connected in an directed cycle graph with self-loops such that the weight on each edge is $0.5$. The agents have a finite set of hypotheses $\boldsymbol{\Theta}=\{\theta_1,\theta_2\}$, where the true parameters for each hypothesis are $\mu_{i\theta_1}=0$, $\lambda_{i\theta_1}=0.5$, $\mu_{i\theta_2}=0$, and $\lambda_{i\theta_2}=0.4$ $\forall i\in\mathcal{M}$ so that $\theta^*=\theta_1$. At each time step $t\ge 1$, each agent receives a measurement drawn from the ground truth distribution with mean $\mu_{i\theta^*}=0$ and precision $\lambda_{i\theta^*}=0.5$ $\forall i\in\mathcal{M}$. In the training phase, the amount of prior evidence collected by each agent is randomly chosen within three categories, Low Evidence with $R_{i\theta}\in[0, 100]$, High Evidence with $R_{i\theta}\in[10^3,10^4]$, and Infinite Evidence where we set $R_{i\theta}$ to a very large number. Then, the network is simulated for $T=10^6$ time steps with the belief update rule \eqref{eq:LL_rule}. 

First, in Fig.~\ref{fig:mu_graphs}, we present the evolution of beliefs for each agent, category of evidence, and hypothesis. Additionally, the dotted lines represent the beliefs point of convergence, i.e., $(\prod_{j=1}^m \widetilde{\Lambda}_{j\theta})^\frac{1}{m}$, present in Theorem~\ref{th:LL}. As seen, the amount of prior evidence directly effects the beliefs point of convergence. When the prior evidence is low, the beliefs converge to a value near $1$ since their initialized uncertain likelihood model is close to the model of complete ignorance. Then, as the amount of evidence increases, the uncertain likelihood model becomes closer to the truth distribution of the hypothesis, causing the beliefs to converge to a larger or smaller value. Furthermore, as the evidence grows unboundedly, the beliefs of $\theta_1$ trend toward $\infty$, while the beliefs of $\theta_2$ converge to $0$, as presented in Theorem~\ref{cor:dogmatic}. 

Fig.~\ref{fig:mu_graphs} also indicates that the beliefs are converging to $(\prod_{j=1}^m \widetilde{\Lambda}_{j\theta})^\frac{1}{m}$. To further validate this result, we simulated the network of agents with a fixed amount of prior evidence within each of the three categories for $50$ Monte Carlo simulation runs, where during each run, a new set of measurements were drawn by each agent. Then, we computed the average log-difference between the beliefs and the centralized solution as seen in Fig.~\ref{fig:con_graphs}. The speed of convergence seems relatively unaffected by the number of agents. On the other hand, as the amount of prior evidence increases, the log-difference is larger at a given value of $t$ because the converged values are larger/smaller for $\theta_1$/$\theta_2$. Also, it takes longer to burn off the effects of the larger prior evidence. Still, the log-difference continues to decay as $t$ increases, indicating convergence. 

\begin{figure}[t]
	\subfigure[$\theta_1= \theta^*$]{
		\centering
		\includegraphics[width=0.5\columnwidth]{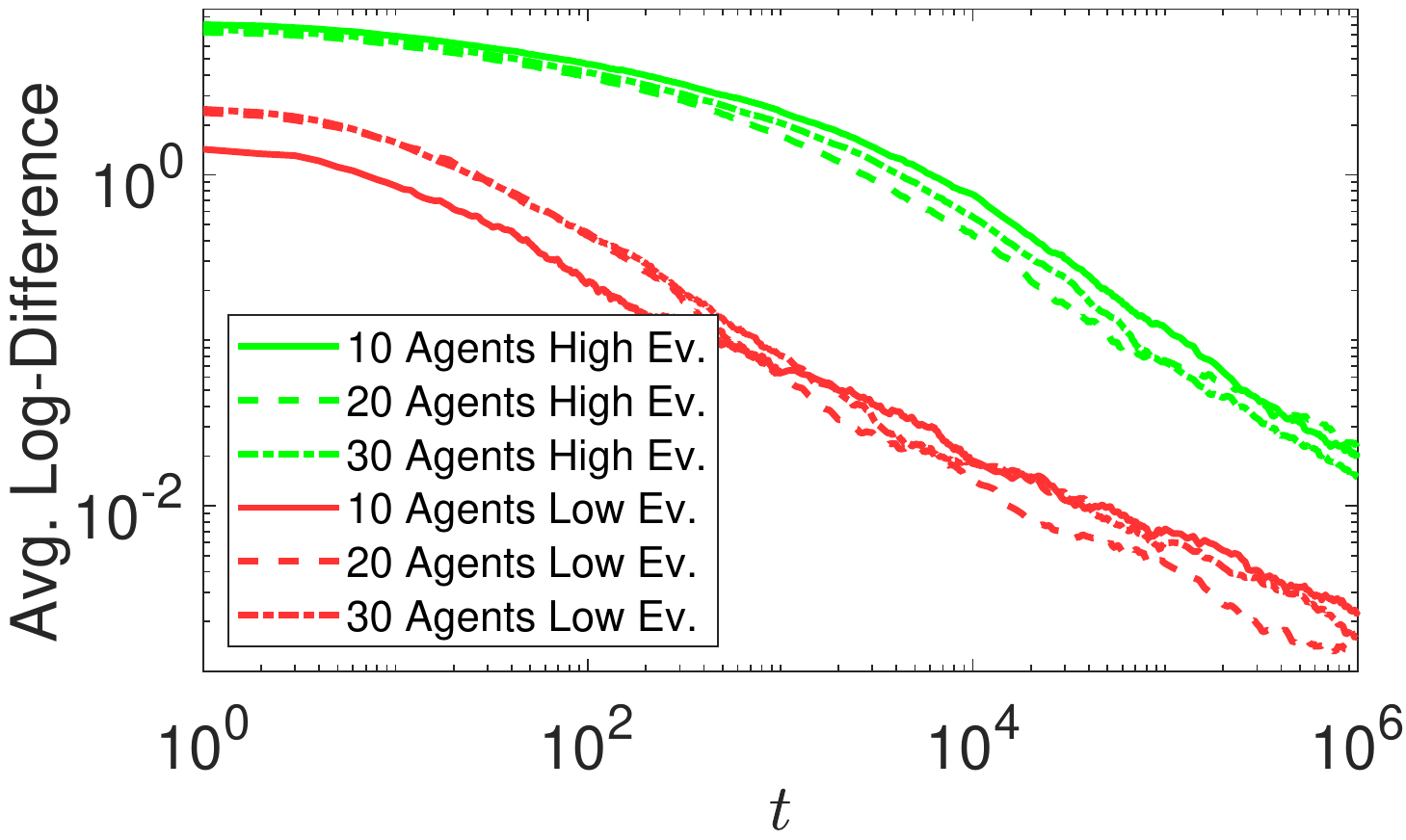}
		\label{fig:con_t1}
	}%
	\subfigure[$\theta_2\ne \theta^*$]{
		\centering
		\includegraphics[width=0.5\columnwidth]{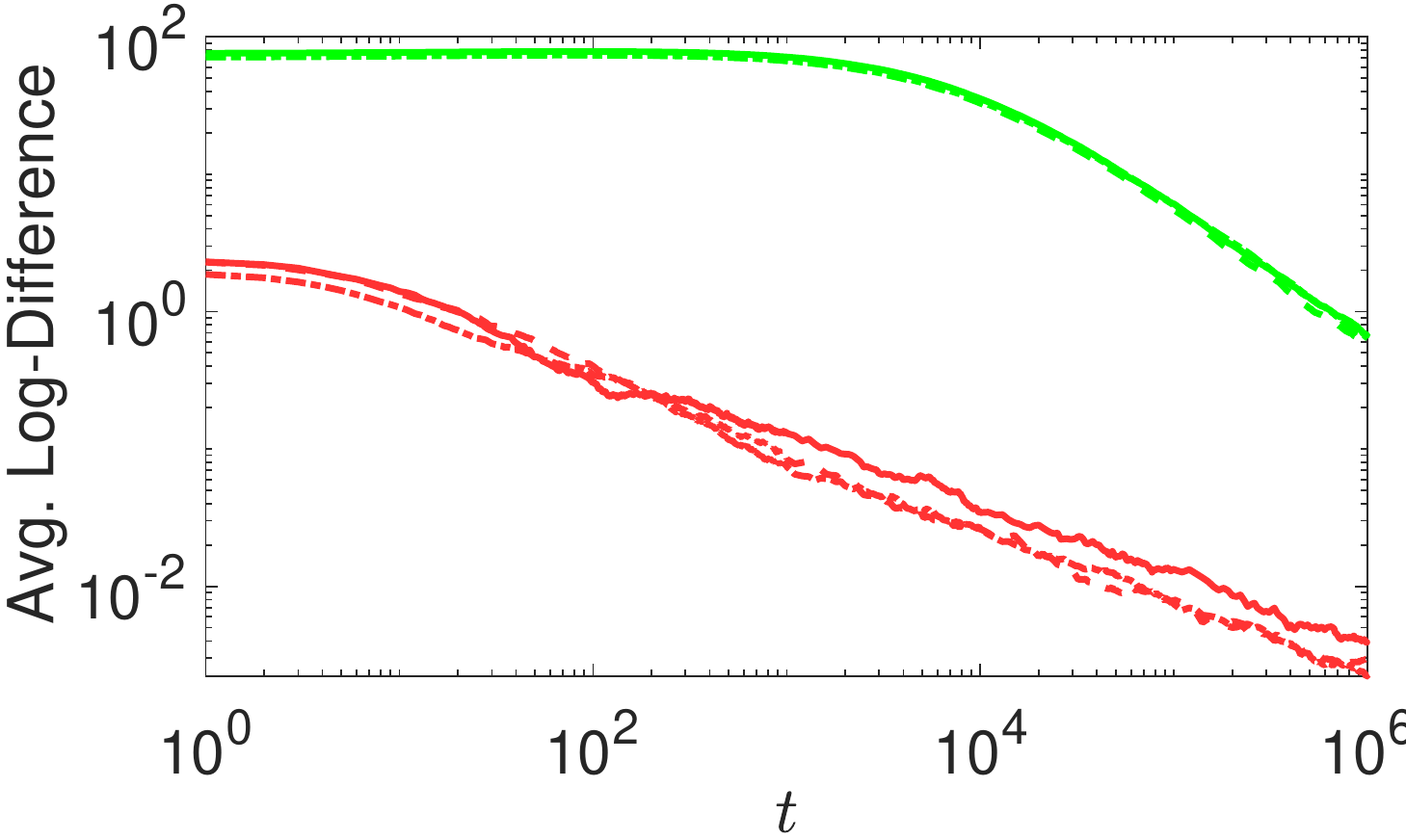}
		\label{fig:con_t2}
	}%

	\caption{The ensemble average difference between the log-beliefs $\log(\mu_{it}(\theta))$ and the log-centralized solution $(\sum_{j=1}^m \log(\widetilde{\Lambda}_{j\theta}))/m$ over the $m\in\{10, 20,30\}$ agents and $50$ Monte Carlo runs. } 
	\label{fig:con_graphs} \vspace{-15pt}
\end{figure}

\section{Conclusion and Future Work} \label{sec:con}
In this work, we explored the properties of non-Bayesian social learning with Gaussian uncertain models, where the amount of prior evidence collected to estimate the mean and variance of the statistical models may vary between $0$ and $\infty$. We built upon the concept of multinomial uncertain models \cite{HULJ2019_TSP} and have concluded that the Gaussian and multinomial uncertain models have the same underlying properties that allow a group of social agents to perform distributed inference. The main difference between the two approaches is that the uncertain likelihood update and the beliefs point of convergence differs. However, this difference does not influence the learning process. 

For future work, we seek to understand the noninformative prior parameters and identify values that enhance inference decisions. We also plan to extend the analysis to other parametric distributions for real-valued measurements and understand for what family of distributions the convergence properties still hold. Finally, we plan to consider non-parametric distributions. 

\appendices
\bibliographystyle{IEEEtran}
\bibliography{ref}

\end{document}